\documentclass[manuscript,screen,authorversion,nonacm,preprint]{jair}

\setcopyright{cc}
\setcopyright{none}
\acmDOI{10.1613/jair.1.xxxxx}

\JAIRAE{Insert JAIR AE Name}
\JAIRTrack{} %
\acmVolume{4}
\acmArticle{6}
\acmMonth{8}
\acmYear{2026}

\RequirePackage[
  datamodel=acmdatamodel,
  style=acmauthoryear,
  backend=biber,
  giveninits=true,
  uniquename=init
  ]{biblatex}

\usepackage{multicol}

\usepackage{xspace}

\usepackage{amsmath}
\usepackage{amsthm}
\usepackage{nicefrac}
\usepackage{thm-restate}
\usepackage{mathtools}
\usepackage{bm}

\renewcommand\tableofcontents{\listoftoc*{toc}} %

\usepackage[dvipsnames]{xcolor}
\usepackage{graphicx}
\usepackage{tikz}
\usetikzlibrary{backgrounds,calc}

\usepackage{hyperref}
\hypersetup{
    pdfencoding=auto, 
    psdextra,
    colorlinks=true,
    citecolor=green!40!black,
    linkcolor=red!50!black,
    urlcolor=blue!80!black
}
\usepackage[nameinlink]{cleveref}
\usepackage{doi}

\usepackage{tabularx}
\usepackage{booktabs}
\usepackage{multirow}

\usepackage{algorithm}
\usepackage{algorithmic}

\usepackage{pdfpages}
\usepackage{multicol}

\usepackage{comment}
\usepackage{todonotes}
\usepackage{tikz}
\usetikzlibrary{calc,scopes,backgrounds,decorations.pathmorphing,fit,math}
\usetikzlibrary{decorations.markings}

\newcommand{\N}{\mathbb{N}}
\newcommand{\Z}{\mathbb{Z}}

\newcommand{\Yes}{\textsc{Yes}\xspace}
\newcommand{\No}{\textsc{No}\xspace}

\newcommand{\agents}{{\ensuremath{{N}}}}    %
\newcommand{\numAgents}{{\ensuremath{{n}}}} %
\newcommand{\pttn}{{\ensuremath{{\pi}}}}    %
\newcommand{\val}{{\ensuremath{{v}}}}       %
\newcommand{\numCoals}{{\ensuremath{{k}}}}  %
\newcommand{\sVec}{\ensuremath{\vec{\bm{n}}}} %
\newcommand{\lVec}{\operatorname{lb}}
\newcommand{\uVec}{\operatorname{ub}}
\newcommand{\wFn}{{\ensuremath{\omega}}} %

\newcommand{\vc}{\operatorname{vc}} %
\newcommand{\tw}{\operatorname{tw}} %

\newcommand{\probName}[1]{\textsc{#1}\xspace}

\newcommand{\Oh}[1]{{\mathcal{O}\left(#1\right)}}

\NewDocumentCommand{\cc}{ O{} O{} m }{\mbox{%
    \expandafter\ifx\expandafter\relax\detokenize{#2}\relax\else{#2-}\fi%
    \textsf{#3}%
    \expandafter\ifx\expandafter\relax\detokenize{#1}\relax\else{-#1}\fi%
    }\xspace}
\newcommand{\NP}{\cc{NP}}
\newcommand{\NPh}{\cc[hard]{NP}}
\newcommand{\NPhness}{\cc[hardness]{NP}}
\newcommand{\NPc}{\cc[complete]{NP}}

\newcommand{\FPT}{\cc{FPT}}
\newcommand{\XP}{\cc{XP}}
\newcommand{\W}[1][1]{\cc{W[#1]}}
\newcommand{\Wh}[1][1]{\cc[hard]{W[#1]}}
\newcommand{\Whness}[1][1]{\cc[hardness]{W[#1]}}

\newtheorem{claim}{Claim}[section]
\newtheorem{remark}{Remark}[section]
\newenvironment{claimproof}[1]{\textsc{Proof.}\hspace{0.15cm}#1}{\hfill$\blacktriangleleft$\medskip}

\definecolor{cbYellow}{RGB}{238,204,102}
\definecolor{cbBlue}{RGB}{102,153,204}
\definecolor{cbRed}{RGB}{153,68,85}
\definecolor{cbGreen}{RGB}{34,136,51}

\begin{document}

\fancypagestyle{firstpagestyle}{
    \fancyhf{}%
    \fancyfoot[RO,LE]{}%
  }%

\fancyfoot{}

\title[Individual Rationality in Constrained Hedonic Games: 
Additively Separable and Fractional Preferences]{Individual Rationality in Constrained Hedonic Games:\\ 
Additively Separable and Fractional Preferences}
\titlenote{An extended abstract of this work has been published in the Proceedings of the 25th International Conference on Autonomous Agents and Multiagent Systems, {AAMAS}~'26~\citep{FioravantesGMS2026}.}
\author{Foivos Fioravantes}
\affiliation{
    \institution{Czech Technical University in Prague}
    \city{Prague}
    \country{Czech Republic}
}
\email{foivos.fioravantes@fit.cvut.cz}
\orcid{0000-0001-8217-030X}

\author{Harmender Gahlawat}
\affiliation{
    \institution{Ben-Gurion University of the Negev}
    \city{Beersheba}
    \country{Israel}
}
\email{harmendergahlawat@gmail.com}
\orcid{0000-0001-7663-6265}

\author{Nikolaos Melissinos}
\affiliation{
    \department{Computer Science Institute, Faculty of Mathematics and Physics}
    \institution{Charles University}
    \city{Prague}
    \country{Czech Republic}
}
\email{melissinos@iuuk.mff.cuni.cz}
\orcid{0000-0002-0864-9803}

\author{Šimon Schierreich}
\affiliation{
    \institution{AGH University of Krakow}
    \city{Krakow}
    \country{Poland}
}
\affiliation{
    \institution{Czech Technical University in Prague}
    \city{Prague}
    \country{Czech Republic}
}
\email{schiesim@fit.cvut.cz}
\orcid{0000-0001-8901-1942}

\renewcommand{\shortauthors}{Fioravantes, Gahlawat, Melissinos \& Schierreich}

\begin{abstract}
    Hedonic games are an archetypal problem in coalition formation, where a set of selfish agents want to partition themselves into \emph{stable} coalitions. In this work, we focus on two natural constraints on the possible outcomes. First, we require that exactly~$\numCoals$ coalitions are created. Then, loosely following the model of Bilò et al. (AAAI 2022), we assume that each of the~$\numCoals$ coalitions is additionally associated with a lower and upper bound on its size. The notion of stability that we study is that of \emph{individual rationality} (IR), which requires that no agent strictly prefers to be alone compared to being in his or her coalition.
     
    Although IR is trivially satisfiable even in the most general models of hedonic games, the complexity picture of deciding whether an IR allocation exists, considering the above constraints, is unexpectedly rich. We reveal that tractable fragments of this computational problem require surprisingly nontrivial arguments, even if we restrict ourselves to \emph{additively separable} and \emph{fractional hedonic games}. Our tractability results, achieved by exploiting the structure of the underlying \textit{preference graph}, are also complemented by their intractability counterparts, painting a fairly complete picture of the tractability landscape of this problem.
\end{abstract}

\maketitle

\section{Introduction}

Organizing a seating arrangement for a banquet may seem straightforward, but quickly becomes complex when interpersonal relationships are considered. For example, placing Janine from HR at the same table as Sheldon from the Physics Department could result in a very unpleasant evening---not just for them, but likely for all the attendees. And it is not only about animosities; seating Sheldon at the table only with strangers can be a similarly explosive choice. In addition, there may be certain restrictions on seating, including the number of available tables and their capacity.

The situation introduced above appears naturally in various domains and falls in the area of \emph{coalition formation}. More formally, the goal here is to partition a set of agents into several groups, called \emph{coalitions}, so that no agent is motivated to deviate from its coalition to improve their utility. In more game-theoretical terms, we are interested in \emph{stable} partitions. The prevailing model in this area is that of \emph{hedonic games}~\citep{DrezeG1980,AzizS2016}, where agents care only about other agents in their own coalition and not about how the additional coalitions are formed.

Traditionally, arbitrarily constructed coalition structures are acceptable as long as they meet certain stability criteria. However, in many real-life scenarios, such as the one at the beginning of this section, we have an additional restriction on the outcome. First, as argued by \citet{SlessHKW2018}, in certain situations, it may be necessary to form an exact number of coalitions; see also \citep{WaxmanKH2021,BarrTKRH24}. We call this model \emph{$k$-hedonic games}. A different restriction, motivated by the recently proposed model of \citet{BiloMM2022} (see also \citep{LiMNS2023,DeligkasEIKS2025,AgarwalARN2025}), asks for~$k$ coalitions, each of size between a given lower and upper bound. We call this model \emph{hedonic games with size-constrained coalitions}. Several of these works study the computational aspects of finding stable outcomes under these two restrictions. However, they are mostly interested in Nash, individual, or core stability, which are all very demanding, rarely satisfiable, and the associated computational problems are often intractable. In our work, we return to the foundational stability notion of \emph{individual rationality} (IR)~\citep{Roth1977,AshlagiR2011,PolevoyD2022,DeligkasEKS2024b}, which is strictly weaker than the notions mentioned above (see, e.g., \citep[Figure 15.1]{AzizS2016}). Intuitively, a coalition structure is IR if no agent strictly prefers to be alone compared to being in its current coalition. 
It is a well-known fact that every hedonic game admits a trivial individually-rational solution; just place each agent in a coalition on its own. Due to this, individual rationality is completely overlooked in the relevant literature. However, as we show, our restrictions turn IR into an unexpectedly interesting solution concept with a very colorful complexity picture.

\begin{table*}[bt!]
\caption{Summary of our contributions. We use~$\vc(G)$ to denote the vertex cover number of~$G$, and~$\tw(G)$ for its treewidth.}
\label{tab:contributions}
\centering
\renewcommand{\arraystretch}{1.3}
\begin{tabularx}{\linewidth}{@{}lXcc@{}} 
\toprule
\textbf{Problem} & \textbf{Restrictions} & \textbf{Complexity} & \textbf{Reference} \\
\midrule
\multirow{2}{*}{\begin{tabular}[c]{@{}l@{}}Both {$k$-ASHGs} \\and {ASHGs-SCC}
\end{tabular}} 
& for every~$k \geq 2$, symmetric valuations,~$G$ is a split graph with~$\Oh{k^2}$ negative edges and~$\vc(G)=2k$ & \NPc & \Cref{thm:ASHG:VC:NPhard}  \\
\cmidrule(l){2-4}

& for every~$k \geq 2$, symmetric valuations,~$G$ is a clique, and the valuations use six different unary encoded values & \NPc & \Cref{thm:ASHG:NPc:clique} \\
\cmidrule(l){2-4}

&  symmetric, unary encoded valuations, parameterized by the number of coalitions~$k$, the number of negative edges, and~$\vc(G)$ & \Wh & \Cref{thm:ASHG:VC:unary:Whard}\\
\cmidrule(l){2-4}

&  unary encoded valuations, parameterized by~$\vc(G)$ & \XP & \Cref{thm:ASHG:VC:unary:XP}\\
\cmidrule(l){2-4} 

& for~$k=2$, symmetric and binary valuations,  & \Wh & \Cref{thm:ASHG:TD:binary:Whard}\\
& parameterized by the treedepth of~$G$ &  \\

\midrule
\multirow{1}{*}{\begin{tabular}[c]{@{}l@{}} {Only~$k$-ASHGs}
\end{tabular}} 

& parameterized by~$\vc(G)$ and the maximum weight~$\wFn_{\max}$ 
& \FPT & \Cref{thm:ASHG:FPT:VC:Wmax}\\
\cmidrule(l){2-4} 
& binary valuations, parameterized by~$\tw(G)$  & \XP & \Cref{thm:ASHG:TW:binary:XP} \\

\midrule
\multirow{1}{*}{\begin{tabular}[c]{@{}l@{}} {ASHGs-SCC}
\end{tabular}} 

& parameterized by~$\vc(G)$, the number of coalitions $\numCoals$, and the maximum weight~$\wFn_{\max}$ 
& \FPT & \Cref{thm:SCC:FPT:VC:k:Wmax}\\
\cmidrule(l){2-4} 
& binary valuations, parameterized by~$\tw(G)$ and the number of coalitions $\numCoals$  & \XP & \Cref{thm:ASHG:TW:binary:XP} \\

\bottomrule
\end{tabularx}
\end{table*}

\subsection{Our Contribution}
The main contribution of our work is a detailed algorithmic landscape of the two restrictions of coalition formation introduced above assuming two prominent subclasses of hedonic games. As classical complexity renders the problem very soon intractable, we turn to the finer-grained framework of parameterized complexity\footnote{A formal introduction to parameterized complexity is given in \Cref{sec:prelims}.}. Intuitively, we study the complexity of a computational problem not only with respect to the input size, but also assuming some \emph{parameter}~$p$ additionally restricting the input instance. The ultimate goal of parameterized algorithms is to find (or prove that it is not possible by showing that the problem is \Wh) an algorithm whose running time is exponential only in the parameter, and its time dependence on the input size is only polynomial in it. Such algorithms are called \FPT. Slightly worse are \XP algorithms, whose running time is also polynomial in the input size, but the degree of this polynomial depends on the parameter. One can conditionally rule out the existence of an \XP algorithm by showing that a problem is \NPh already for a constant value of~$p$.

The two classes of hedonic games we study are \emph{additively separable hedonic games} (ASHGs)~\citep{BogomolnaiaJ2002} and \emph{(modified) fractional hedonic games} (MFHGs)~\citep{AzizBBHOP2019,Olsen2012}. In these games, each agent~$i$ has a specific value~$\val_i(j)$ for every other agent~$j$. The utility of an agent~$i$ in a coalition~$C$ containing~$i$ is then simply the sum of individual values agent~$i$ has for all other agents in~$C$ (divided by the size of~$C$ in case of FHGs). We can succinctly encode such preferences using a weighted digraph~$G$ over the set of agents containing an arc~$(i,j)$ of weight~$\val_i(j)$ for every pair of distinct agents~$i$ and~$j$ (and we can remove all zero weight arcs). 

First, we observe that if the desired stability notion is individual rationality, these two classes coincide; therefore, it suffices to study only ASHGs. Then, we investigate the complexity of ASHGs with respect to two natural restrictions: the structure of the preference graph~$G$ and the maximum value (arc weight) we have in the instance. As we reveal, these two dimensions of the problem interplay in a non-trivial way. Our results are summarized in \Cref{tab:contributions}.

Specifically, if the weights can be general, both variants of ASHGs are \NPc, even if the graph~$G$ is a split graph of constant vertex cover number. If the weights are polynomially bounded by the number of agents, then there is an \XP algorithm with respect to the vertex cover number of~$G$. For \FPT tractability, we need to additionally parameterize by the maximum weight, as without this both problems are \Wh already for the combined parameter of the number of coalitions and the vertex cover of~$G$. Next, we show that the \XP (or \FPT) algorithm for the vertex cover number parameterization cannot be extended to more general graph classes, as we show that both problems are \Wh when parameterized by treedepth, even if~$\numCoals=2$ and the valuations are binary. Treedepth is a structural parameter that lies between the vertex cover number and the celebrated treewidth. The complexity picture is completed by an \XP algorithm for parameterization by treewidth and binary valuations, and we also show that the problem is highly intractable in \emph{dense} graphs by showing \NPhness of both variants even if the underlying graph is a clique. Our algorithms use various techniques, notably including the N-fold ILP formulation---a technique which is completely novel in the area of coalition formation. In addition, all our hardness results hold for symmetric preferences, while our algorithms work also for non-symmetric ones. This draws our results even stronger.

\subsection{Related Work}

\emph{Additively separable hedonic games} (ASHGs)~\citep{BogomolnaiaJ2002}, \emph{fractional hedonic games} (FHGs)~\citep{AzizBBHOP2019}, and \emph{(modified) fractional hedonic games}~\citep{Olsen2012} are probably the most studied subclass of hedonic games, as is clear from the number of papers investigating them~\citep{SungD2010,AzizBS2013,PetersE2015,Peters2016a,Peters2016b,FlamminiMZ2017,HanakaKL2024,BrandtBT2024,FioravantesGM2025,BullingerCS2025,FanelliMM2025,HanakaIO2025}. However, all of these work focus on more general notions of stability than IR, as IR is trivially satisfiable without any further restriction of the game.

There are also other subclasses of hedonic games, such as anonymous~\citep{BanerjeeKS2001,BogomolnaiaJ2002}, diversity~\citep{BredereckEI2019,BoehmerE2020,GanianHKSS2023}, social distance~\citep{BranzeiL2011,GanianHKRSS2023},~$\mathcal{B}$ games~\citep{CechlarovaR2001,CechlarovaH2003}, or~$\mathcal{W}$ games~\citep{CechlarovaR2001,CechlarovaH2004}. However, in these classes of games, either no coalition is worse than the singleton, and therefore, IR is always achievable even under our constraints, or individual rationality was not explored. The only exception is the recent paper of \citet{DeligkasEKS2024b}, who studied individual rationality in \emph{topological distance games} (TDGs)~\citep{BullingerS2023}. However, the model of TDGs is much more general compared to ours, and therefore none of their (mostly negative) results carry over to our setting.

Next,~$\numCoals$-hedonic games and hedonic games with fixed size coalitions are not the only restrictions studied in the literature. For example, \citet{WrightV2015} introduced hedonic games with restricted maximum coalition size, a variant studied also in \citep{LevingerAH2023,GanianHKSS2023,GanianHKRSS2023,FioravantesGM2025}. However, even under this restriction, individual rationality is easy as we can form an arbitrary number of coalitions of size one. Recently and independently of our work, \cite{BullingerDEG2025} studied a setting of ASHGs where the same lower bound is given for each coalition. Notably, they do not assume IR as a solution concept.

Finally, N-fold integer programming~\citep{LoeraHOW2008,Onn2009,HemmeckeOR2013} is an important technique in the design of parameterized algorithms. It has been used successfully in multiple relevant areas such as scheduling~\citep{KnopK2018,KouteckyZ2025}, voting theory~\citep{KnopKM2020,BlazejKPS2024}, and fair division~\citep{BredereckKKN2019,BredereckFKKN2021,Bredereck0KN2023}, to name at least a few related to computational social choice, algorithmic game theory, and collective decision making. Nevertheless, we are not aware of any work using this technique in the area of coalition formation games.

\section{Preliminaries}\label{sec:prelims}

Let~$i\leq j\in \N$. We use~$[i,j]$ to denote the closed interval~$\{i,\ldots,j\}$, and we set~$[i] = \{1,\ldots,i\}$ and~$[i]_0 = [i] \cup \{0\}$. Additionally, for a set~$S$, we use~$2^S$ to denote the power set of~$S$. 

\subsection{Hedonic Games}
Let~$\agents$ be a finite set of~$\numAgents$ agents. A nonempty subset~$C\subseteq \agents$ is called a \emph{coalition}. When~$|C|=\numAgents$, we say that~$C$ is the \emph{grand coalition}, while if~$|C|=1$, we call~$C$ a \emph{singleton}. A \emph{coalition structure}~$\pttn$ is a partition of agents into coalitions, and by~$\pttn(i)$ we denote the coalition agent~$i\in\agents$ is assigned.%

Each agent~$i\in\agents$ provides its subjective \emph{valuation function}~$\val_i\colon 2^\agents\to\Z$, which assigns to each coalition~$C$ a numerical value expressing~$i$'s satisfaction when being part of~$C$. As is usual in hedonic games, 
we assume that~$\val_i$ is defined only for coalitions that contain the agent~$i$. Slightly abusing the notation, we extend the valuations from coalitions to coalition structures by setting~$\val_i(\pttn) = \val_i(\pttn(i))$ and we say that~$\val_i(\pttn)$ is the \emph{utility} the agent~$i$ gets in the coalition structure~$\pttn$.

In this work, we investigate two different restrictions of hedonic games. In the first variant, we are interested only in coalition structures consisting of exactly~$\numCoals\in\N$ non-empty coalitions. Formally, a \emph{$\numCoals$-hedonic game}~$\Gamma$ is a triple~$(\agents,(\val_i)_{i\in\agents},\numCoals)$, where~$\agents$ is a set of agents,~$\val_i$ is a valuation function for every agent~$i\in\agents$, and~$\numCoals$ is the desired number of coalitions. In the second variant, not only the number of coalitions, but also their sizes are prescribed. A \emph{hedonic game with size-constrained coalitions} (HG-SCC)~$\Gamma$ is a~$5$-tuple~$(\agents,(\val_i)_{i\in\agents},\numCoals,\lVec,\uVec)$, where~$\lVec\colon[\numCoals]\to\N$,~$\uVec\colon[k]\to\N$, and~$\lVec(j) \leq \uVec(j)$ for every~$j\in[\numCoals]$. The goal in HG-SCC is to decide whether a coalition structure~$\pttn=(\pttn_1,\ldots,\pttn_\numCoals)$ exists such that~$\lVec(j) \leq |\pttn_j| \leq \uVec(j)$ for every~$j\in[\numCoals]$. Without loss of generality, we assume that function~$\uVec$ satisfies~$\uVec(1) \geq \uVec(2) \geq \cdots \geq \uVec(\numCoals)$. Additionally, we assume~$\sum_{j\in[\numCoals]}\lVec(j) \leq \numAgents \leq \sum_{j\in[\numCoals]} \uVec(j)$. It is easy to see that~$\numCoals$-HG is a special case of HG-SCC with~$\lVec(j) = 1$ and~$\uVec(j)=\numAgents$ for every~$j\in[\numCoals]$. Therefore, any hardness result for~$\numCoals$-HG implies the same hardness result for HG-SCC, and any algorithmic upper bound for HG-SCC directly carries over to~$\numCoals$-HG.

Throughout the paper, we are interested in finding coalition structures where no agent can improve its utility by \emph{deviating} from the given coalition structure~$\pttn$. If no beneficial deviation is possible, we say that a coalition structure~$\pttn$ is \emph{stable}. Specifically, we adopt the notion of \emph{individual rationality}, which, as argued by \citet{AzizS2016}, is a minimal requirement for a solution to be considered stable and is defined as follows.

\begin{definition}
    A coalition structure~$\pttn$ is \emph{individually rational (IR)} if~$\val_i(\pttn) \geq \val_i(\{i\})$ for every agent~$i\in\agents$.
\end{definition}

Intuitively, individual rationality requires that no agent prefers to deviate from the current partition and act alone. Note that the deviation of agent~$i$ to a singleton coalition may technically violate our condition on having exactly~$\numCoals$ coalitions. Therefore, we interpret such deviations as the deviating agent completely leaving the game rather than actually forming a singleton coalition. Alternatively, one can view such a deviation as a thought experiment: it need not actually occur, but its possibility indicates that the solution is undesirable.

In general, the encoding of the valuation functions may require space exponential in the number of agents---we need to store a value for every agent and every possible coalition. Therefore, we study various restrictions of hedonic games that come with a succinct representation of valuations.

\paragraph{Additively Separable Hedonic Games.} In \emph{additively separable hedonic games} (ASHGs), an agent~$i\in\agents$ has a value~$\val_i(j)$ for each agent~$j\in\agents$. We require~$\val_i(i) = 0$. Then, the utility of agent~$i$ in a coalition structure~$\pttn$ is simply~$\sum_{j\in\pttn(i)} \val_i(j)$. If~$\val_i(j) = \val_j(i)$ for all~$i,j\in\agents$, we say that the valuations are \emph{symmetric}, and if~$\val_i(j)\in\{-1,0,1\}$ for every~$i,j\in\agents$, we call the valuations \emph{binary}. The valuations in ASHGs can be represented using a weighted digraph~$G=(\agents,E,\wFn)$ that contains an edge~$ij$ of weight~$\wFn(ij) = \val_i(j)$ if and only if~$\val_i(j) \not= 0$. When the valuations are symmetric, the graph~$G$ is undirected.

\paragraph{Fractional Hedonic Games.} In \emph{fractional hedonic games} (FHGs), similarly to ASHGs, every agent~$i\in\agents$ has a valuation function~$\val_i$ assigning some value to every agent~$j\in\agents$. These valuations are then extended to coalitions as follows. Let~$C\subseteq \agents$ be a coalition such that~$i\in C$. Then, we have~$\val_i(C) = (\sum_{j\in C} \val_i(j)) / |C|$. Again, we can capture the valuations using a similar weighted digraph as in the case of ASHGs.

\paragraph{Modified Fractional Hedonic Games} \emph{Modified fractional hedonic games} (MFHGs) are defined identically to FHGs, the only difference is that the utility of an agent~$i\in\agents$ in a coalition~$C$ containing~$i$ is~$0$ if~$|\pttn(i)| = 1$ and~$(\sum_{j\in C} \val_i(j)) / (|C|-1)$ otherwise.

\subsection{Parameterized Complexity}
This is a domain of algorithm design in which a finer analysis of computational complexity is proposed, compared to the classic approach. This is done by considering additional measures of complexity, referred to as \textit{parameters}. In short, this field proposes a multidimensional time-complexity analysis, where each parameter defines its own dimension. 
Formally, a parameterized problem is a set of instances~$(x,k) \in \Sigma^* \times \mathbb{N}$;~$k$ is the \textit{parameter}.
In this paradigm, an algorithm is considered efficient if it runs in time~$f(k)|x|^{\Oh{1}}$ time for any arbitrary computable function~$f\colon \mathbb{N}\to\mathbb{N}$. Such algorithms are known as \emph{fixed-parameter tractable} (\FPT). If such an algorithm exists for a problem, we say that this problem \textit{is in \FPT}. On the opposite side of the \FPT class, we have the class of \W[1]-hard problems, and it is widely accepted that if a problem is \W[1]-hard then there exists no \FPT algorithm to solve it. Finally, the middle ground between these notions is captured by the class \XP, which contains all parameterized problems that can be solved in time~$|x|^{f(k)}$ for some computable function~$f$. We refer the interested reader to classical monographs in the field~\citep{downey2012parameterized,CyganFKLMPPS15} for a more in-depth introduction to this topic.

\subsection{Structural Parameters}
Let~$G=(V,E)$ be a graph.
A set~$U\subseteq V$ is a \emph{vertex cover} of~$G$ if for every edge~$e\in E$ it holds that~$U\cap e \not= \emptyset$. The \emph{vertex cover number} of~$G$, denoted~$\vc(G)$, is the minimum size of a vertex cover of~$G$.

\begin{definition}
    A \emph{tree-decomposition} of~$G$ is a pair~$(T,\mathcal{B})$, where~$T$ is a tree,~$\mathcal{B}$ is a family of sets assigning to each node~$t$ of~$T$ its \emph{bag}~$B_t\subseteq V$, and the following conditions hold:
    \begin{itemize}
    	\item for every edge~$uv\in E(G)$, there is a node~$t\in V(T)$ such that~$u,v\in B_t$ and
    	\item for every vertex~$v\in V$, the set of nodes~$t$ with~$v\in B_t$ induces a connected subtree of~$T$.
    \end{itemize}
    The \emph{width} of a tree-decomposition~$(T,\mathcal{B})$ is defined as~$\max_{t\in V(T)} |B_t|-1$, and the treewidth~$\tw(G)$ of a graph~$G$ is the minimum width of a tree-decomposition of~$G$.
\end{definition}

It is well known that computing a tree-decomposition of minimum width is in \FPT w.r.t. the treewidth~\citep{Kloks94,Bodlaender96}. Recently there are even more efficient algorithms that have been proposed for obtaining near-optimal tree-decompositions~\citep{KorhonenL23}.

For algorithmic purposes, we use a slight variation of the above definition of the tree-decomposition which is more suitable for a dynamic programming approach.

\begin{definition}
    A tree-decomposition~$(T,\mathcal{B})$ is called \emph{nice} if every node~$x\in V(T)$ is exactly of one of the following four types:
    \begin{description}
        \item[Leaf Node:]~$x$ is a leaf of~$T$ and~$B_x=\emptyset$. 
        \item[Introduce Node:]~$x$ has a unique child~$y$ and there exists~$v\in V\setminus B_y$ such that~$B_x=B_{y}\cup \{v\}$.
        \item[Forget Node:]~$x$ has a unique child~$y$ and there exists~$v\in B_y$ such that~$B_{x}=B_y\setminus\{v\}$.
        \item[Join Node:]~$x$ has exactly two children~$y$ and~$z$, and~$B_x=B_{y}=B_{z}$.
    \end{description}
\end{definition}
It is known that every graph~$G=(V,E)$ admits a nice tree-decomposition that has width equal to~$\tw(G)$ and such a decomposition can be found efficiently~\citep{B98}.

The \emph{tree-depth} of~$G$ can be defined recursively: if~$|V|=1$ then~$G$ has tree-depth~$1$. Then,~$G$ has tree-depth~$k$ if there exists a vertex~$v\in V$ such that every connected component of~$G[V\setminus\{v\}]$ has tree-depth at most~$k-1$.

\subsection{N-fold ILP}
The goal here is to minimize a linear objective~$f$ over a set of structured constraints. Formally, let~$r,D\in\N$ and~$s_i,t_i\in \N$ for every~$i\in[D]$. An N-fold ILP contains~$d = \sum_{i\in[D]} t_i$ variables partitioned into~$D$ \emph{bricks}. Let~$x^{(i)}$ denote the~$i$-th brick. Then, the constraints of an N-fold ILP have the following form
\begin{align}
    && E_1x^{(1)} + \cdots + E_Dx^{(D)} &= \vec{b}_0 && \label{def:Nfold:global} \\
    \forall i \in [D] && A_i x^{(i)} &= \vec{b}_i && \label{def:Nfold:local} \\
    \forall i \in [D] && \vec{\ell}_i \leq x^{(i)} &\leq \vec{u}_i \label{def:Nfold:box} &&
\end{align}
where~$E_i \in \Z^{r\times t_i}$,~$A_i \in \Z^{s_i\times t_i}$,~$\vec{b}_0 \in \Z^r$,~$\vec{b}_i \in \Z^{s_i}$, and~$\vec{\ell}_i,\vec{u}_i\in\Z^{t_i}$ for every~$i\in[D]$. We call constraints of type \eqref{def:Nfold:global} \emph{global}, constraints of type \eqref{def:Nfold:local} \emph{local}, and constraints of type \eqref{def:Nfold:box} \emph{box}. Observe that~$r$ is the total number of global constraints and we use~$s = \max_{i\in[D]} s_i$ to denote the maximum number of occurrences of a single variable in local constraints. In our algorithms, we use the following result of \citet{EisenbrandHKKLO2025}.%

\begin{theorem}[\citet{EisenbrandHKKLO2025}]\label{thm:Nfold:runningTime}
    An instance of N-fold ILP can be solved in~$a^\Oh{r^2s+rs^2}\cdot d\cdot\log(d)\cdot L$, where~$a = \max_{i\in[D]}\{2,||E_i||_\infty,||A_i||_\infty\}$ and~$L$ is the maximum feasible value of the objective.
\end{theorem}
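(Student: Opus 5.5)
Since \Cref{thm:Nfold:runningTime} is imported from \citet{EisenbrandHKKLO2025}, we do not reprove it. If we had to, we would follow the standard \emph{Graver-basis iterative augmentation} framework and sketch the argument as follows. Write the whole system \eqref{def:Nfold:global}--\eqref{def:Nfold:box} as $\mathcal{A}x=\vec b$, $\vec\ell\le x\le\vec u$, where $\mathcal{A}$ is the block matrix with the row $(E_1,\dots,E_D)$ on top and the block diagonal $\mathrm{diag}(A_1,\dots,A_D)$ below. The Graver basis $\mathcal{G}(\mathcal{A})$ is the set of $\sqsubseteq$-minimal nonzero integer kernel elements, where $\sqsubseteq$ is the conformal order. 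Its key property is that, for any feasible non-optimal $x$, some $g\in\mathcal{G}(\mathcal{A})$ and step length $\lambda\in\N$ give a feasible improving point $x+\lambda g$.

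\textbf{Step 1: the Graver elements are short.} Fix the bricks of $g=(g^{(1)},\dots,g^{(D)})\in\mathcal{G}(\mathcal{A})$. Each brick $g^{(i)}$ lies in $\ker A_i$, so it decomposes conformally into Graver elements of $A_i$. By a Steinitz-lemma argument, these have $\ell_1$-norm at most $(2as+1)^{s}$. Collect all these pieces and apply the Steinitz lemma once more, now to their images under the $E_i$, which lie in $\Z^r$. Minimality of $g$ then bounds the number of pieces, giving $\|g\|_1\le a^{\Oh{rs}}$ and $\|E_ig^{(i)}\|_\infty\le a^{\Oh{rs}}$ for every brick.

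\textbf{Step 2: finding a good augmenting step is a dynamic program.} For a fixed step length $\lambda$, we look for a kernel element $g$ with $\|g\|_1$ bounded as in Step 1 that maximizes the objective improvement subject to $\vec\ell\le x+\lambda g\le\vec u$. We process the bricks left to right. The state after brick $i$ is the prefix sum $\sum_{j\le i}E_jg^{(j)}$, which lies in a box of side $a^{\Oh{rs}}$ in $\Z^r$, so there are $a^{\Oh{r^2s}}$ states. Each transition enumerates the admissible short bricks $g^{(i)}\in\ker A_i$ that satisfy the box constraints. Accepting a constant-factor (\emph{Graver-best}) approximation over step lengths restricted to powers of two gives total time $a^{\Oh{r^2s+rs^2}}\cdot d$ per augmentation.

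\textbf{Step 3: bound the number of augmentations.} By the standard conformal-sum argument, $x^*-x$ is a sum of at most $2d-2$ Graver elements. Hence a Graver-best step closes at least a $1/(2d-2)$ fraction of the gap to the optimum. The number of iterations is therefore $\Oh{d\log L}$, and the refined analysis of the cited paper (via proximity and reducing the box sizes) brings this to the stated $d\log d\cdot L$-type bound.

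\textbf{Step 4: obtain an initial feasible solution.} We solve an auxiliary N-fold instance with slack variables, whose objective is the total slack. It has the same block structure and $a$ grows by at most one, so Steps 1--3 apply to it as well.

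The main obstacle is Step 1 together with the efficiency bookkeeping in Step 2. Obtaining an exponent of order $r^2s+rs^2$, rather than something like $(rs)^2$, needs the two-level Steinitz argument and a careful state-space bound; for these we would rely on the cited paper's proximity results.
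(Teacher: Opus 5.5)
Your handling of the statement matches the paper. The result is imported from the cited work without proof, and the citation is all the paper needs. The optional sketch, however, does not actually deliver the stated bound, so do not present it as if it did.

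\textbf{The gap is in Step~2.} Enumerating all admissible bricks $g^{(i)}\in\ker A_i$ with $\|g^{(i)}\|_1\le a^{\Oh{rs}}$ can take about $t_i^{a^{\Oh{rs}}}$ time; take, for instance, $A_i$ with many identical columns. The brick widths $t_i$ are not bounded by the parameters, so this is not $a^{\Oh{r^2s+rs^2}}\cdot d$. Note also that nothing in your state space produces the $rs^2$ term you claim.

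What the brick-level dynamic program really needs is the following: for each of the $a^{\Oh{r^2s}}$ possible values $\beta$ of $E_ig^{(i)}$, a best brick with $E_ig^{(i)}=\beta$. Computing this without enumeration is the missing step. The standard fix is a layered graph with one layer per \emph{variable}, not per brick:
\begin{itemize}
\item a state records the global prefix sum in $\Z^r$ together with the $A_i$-image of the already processed coordinates of the current brick in $\Z^s$;
\item both components are bounded coordinatewise by $a\|g\|_1=a^{\Oh{rs}}$;
\item the local component must return to $0$ at every brick boundary.
\end{itemize}
This gives $a^{\Oh{r^2s}}\cdot a^{\Oh{rs^2}}$ states per layer and $\Oh{\|g\|_1}$ edges per state. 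Hence each step length costs $a^{\Oh{r^2s+rs^2}}\cdot d$, and the local $\Z^s$ component is exactly where the $rs^2$ term comes from.

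\textbf{Smaller points.}
\begin{itemize}
\item In Step~3, $x^*-x$ is a nonnegative integer combination $\sum_j\lambda_jg_j$ of at most $2d-2$ conformal Graver elements. It is not a plain sum of at most $2d-2$ of them.
\item Also in Step~3, $\Oh{d\log L}$ augmentations at cost $\Omega(d)$ each is already quadratic in $d$. The $d\log d$ dependence therefore rests entirely on the cited paper's further machinery, not on your sketch.
\item Step~1 honestly yields $(ars)^{\Oh{rs}}$. The $r,s$ factors disappear into the base $a$ only because the statement itself drops them.
\end{itemize}

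None of this affects the paper, which uses the theorem purely as a black box. In its one application each brick has at most $\numCoals$ variables anyway.
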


\section{Individual Rationality Can Be Hard}

Before we start our algorithmic journey, we observe that all classes with graph-restricted preference we assume in this work coincide, if we are interested in IR solutions. Indeed, the only difference between ASHGs and (modified) FHGs is that in the latter class, the additive utility of an agent is normalized with respect to the coalition size. This, however, has no effect on the non-negativity of the utility, and therefore, we obtain the following.

\begin{lemma}
    A (modified) fractional hedonic game~$\Gamma = (\agents,(\val_i)_{i\in\agents},\numCoals,\lVec,\uVec)$ admits an IR coalition structure if and only if an additively separable hedonic game~$\Gamma' = (\agents,(\val_i)_{i\in\agents},\numCoals,\lVec,\uVec)$ admits an IR coalition structure.
\end{lemma}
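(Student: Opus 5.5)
The plan is to show that the two games have exactly the same set of IR coalition structures. Since both are defined over the same agents, the same $\numCoals$, and the same bounds $\lVec,\uVec$, the set of feasible coalition structures is identical. It therefore suffices to fix an arbitrary feasible $\pttn$ and an agent $i$, and to show that $i$'s IR condition holds in $\Gamma$ exactly when it holds in $\Gamma'$.

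First I compute the singleton utilities. In the ASHG, $\val_i(\{i\}) = \val_i(i) = 0$. In the FHG, $\val_i(\{i\}) = \val_i(i)/1 = 0$. In the MFHG, the singleton utility is $0$ by definition. So in all three models the IR condition for agent $i$ reads $\val_i(\pttn)\geq 0$.

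Next, let $C=\pttn(i)$ and write $S_i = \sum_{j\in C}\val_i(j)$, which is exactly $i$'s utility in $\Gamma'$.

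\begin{itemize}
  \item In the FHG, the utility is $S_i/|C|$. Since $|C|\geq 1$, this is nonnegative if and only if $S_i\geq 0$.
  \item In the MFHG with $|C|\geq 2$, the utility is $S_i/(|C|-1)$ with a positive denominator, so again it is nonnegative if and only if $S_i\geq 0$.
  \item In the MFHG with $|C|=1$, the utility is $0$. Here $S_i=\val_i(i)=0$ as well, so both conditions hold trivially.
\end{itemize}

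Hence agent $i$ is IR in $\pttn$ under $\Gamma$ if and only if it is IR under $\Gamma'$. Quantifying over all agents and all feasible $\pttn$ yields the equivalence, in fact with the same witnessing coalition structure.

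There is no real obstacle. The only care needed is the singleton case of the MFHG, where the formula would divide by zero and the definition instead fixes the utility to $0$; this case is handled separately above. One should also note that $\val_i(i)=0$ is assumed for fractional games as it is for ASHGs, which is what makes the singleton utilities vanish.
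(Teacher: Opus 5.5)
Your proof is correct and follows the same route as the paper: the same coalition structure witnesses IR in both games, because dividing $\sum_{j\in C}\val_i(j)$ by the positive quantity $|C|$ (or $|C|-1$) does not change its sign. Your explicit treatment of the zero singleton utilities and of the MFHG singleton case, where the utility is fixed to $0$, is more careful than the paper's one-line argument, which passes over these points.
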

\begin{proof}
    Let~$\pttn$ be an individually rational coalition structure for~$\Gamma$. Then, since no coalition is of negative size, for every~$i\in\agents$, we have~$\sum_{j\in\pttn_i} \val_i(j) \geq 0$. Therefore,~$\pttn$ is also individually rational for the game~$\Gamma'$. By analogous argumentation, any solution for~$\Gamma'$ is also a solution for~$\Gamma$.
\end{proof}

Therefore, the complexity picture for FHGs is the same as that for ASHGs. For the rest of this paper, we assume only the case of additively separable hedonic games, but the results directly carry over to the setting of FHGs.

First, we show that if we do not have any restriction on valuation functions, deciding whether an individually rational coalition structure exists is computationally highly intractable. 

In particular, the result shows \NPhness already if we ask for partitioning into two coalitions, and the graph is a split graph with four negative edges and a vertex cover of size four.

\begin{theorem}\label{thm:ASHG:VC:NPhard}
    For every~$\numCoals \geq 2$, it is \NPc to decide whether a given~$\numCoals$-ASHG or ASHG-SCC~$\Gamma$ admits an individually rational coalition structure, even if the valuations are symmetric and~$G$ is a split graph with~$\Oh{\numCoals^2}$ negative edges and of vertex cover number~$2\numCoals$.
\end{theorem}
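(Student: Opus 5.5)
The plan is a polynomial reduction from \probName{Partition}, which is weakly \NPc. This is enough here because the statement places no bound on the weights, so they may be encoded in binary. Membership in \NP is immediate: a coalition structure can be checked for the size constraints and for individual rationality in polynomial time. Since $\numCoals$-ASHG is a special case of ASHG-SCC (set $\lVec(j)=1$, $\uVec(j)=\numAgents$), it suffices to prove hardness for $\numCoals$-ASHG.

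For $\numCoals=2$, take a \probName{Partition} instance $s_1,\dots,s_m$ with $\sum_i s_i = 2S$. The construction is as follows.
\begin{itemize}
\item The clique part of the split graph consists of $2\numCoals$ vertices $h_1,\dots,h_\numCoals,g_1,\dots,g_\numCoals$.
\item Each pair $h_j,g_j$ is joined by an edge of weight $S$.
\item Every other pair of clique vertices is joined by an edge of weight $-M$, where $M$ is larger than the sum of all positive weights.
\item The independent part consists of item agents $x_1,\dots,x_m$. Each $x_i$ is adjacent to every $h_j$ with weight $s_i$ and to every $g_j$ with weight $-s_i$.
\end{itemize}
The clique vertices form a vertex cover of size $2\numCoals$, the graph is split, and there are $\Oh{\numCoals^2}$ negative edges among clique vertices.

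The $mk$ negative item--$g_j$ edges are a problem for the bound on negative edges. I would first try to fix this by letting each item be adjacent only to $h_j,g_j$ for a single designated pair. Failing that, I would accept the weaker count. The essential structure is the following.
\begin{enumerate}
\item The $-M$ edges force all of $h_1,\dots,h_\numCoals$ into distinct coalitions. Since there are exactly $\numCoals$ coalitions, each $g_j$ must then join $h_j$: it cannot share a coalition with any other $h_{j'}$ or $g_{j'}$.
\item An item in coalition $j$ gets utility $s_i - s_i = 0$, so items are always IR.
\item $h_j$ gets $S + X_j \ge 0$, where $X_j$ is the total value of the items in coalition $j$, so $h_j$ is always IR.
\item $g_j$ gets $S - X_j$, so IR requires $X_j \le S$ for every $j$.
\end{enumerate}
Together with $\sum_j X_j = 2S$ when $\numCoals=2$, this forces $X_1 = X_2 = S$, which is exactly a partition. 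Conversely, a partition gives an IR structure.

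For general $\numCoals \ge 2$, I would add $\numCoals-2$ extra items of value $S$, so that the total is $\numCoals S$. The constraints $X_j \le S$ then force every coalition to carry exactly $S$. Each extra item alone fills one coalition, since all numbers are positive. The remaining two coalitions must therefore split the original numbers into two sums of $S$. This still needs a short exchange argument, because the big items could sit in any coalitions; by the symmetry of the construction the labels are irrelevant.

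The main obstacle I expect is the bookkeeping for the "$\Oh{\numCoals^2}$ negative edges" requirement, that is, which item edges are allowed to be negative. If the item--$g_j$ edges must be counted, I would redesign the construction so that the upper bound $X_j\le S$ is enforced through the clique edge instead. One way is to give $h_j$ a negative clique edge and positive item edges, and to let the items' IR be trivially satisfied. The cap would then become a lower bound $X_j \ge S$ for each $j$, which, combined with a total of $\numCoals S$, again forces equality. That version uses only positive item edges, so all negative edges lie inside the clique. I would adopt this variant:
\begin{itemize}
\item edge $h_jg_j$ of weight $-S$;
\item items adjacent to both $h_j$ and $g_j$ with weight $s_i$;
\item $g_j$ then gets $-S + X_j$ and $h_j$ gets $-S + X_j$, so IR for both requires $X_j \ge S$.
\end{itemize}
With total $\numCoals S$, this yields exact equality and only $\Oh{\numCoals^2}$ negative edges.
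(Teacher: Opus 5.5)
Your adopted variant is the paper's construction: a clique of $2k$ set agents, each pair joined by weight $-S$, all other clique edges carrying a large negative weight, and every item joined to every set agent with weight $s_i$. So your argument is essentially the paper's, except that reducing from plain \textsc{Partition} instead of the paper's restricted \textsc{Equitable Partition} lets you skip the coalition-size step, and your $k-2$ padding items of value $S$ (each forced alone into its own coalition by $X_j \ge S$ and $\sum_j X_j = kS$) supply the case $k>2$, which the paper states but only carries out for $k=2$.
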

\begin{proof}
    We show \NPhness by a reduction from the \probName{Equitable Partition} problem. In this problem, we are given a multi-set~$A=\{a_1,\ldots,a_{2\ell}\}$ of integers such that~$\sum_{a\in A} a = 2t$, and the goal is to decide whether~$S\subseteq[2\ell]$,~$|S| = \ell$, exists so that~$\sum_{i\in S} a_i = \sum_{i \in [2\ell]\setminus S} a_i = t$. The problem is known to be (weakly) \NPh even if each subset~$X\subseteq A$ of size at most~$\ell-1$ sums up to at most~$t-1$~\citep{DeligkasEKS2024}.

    Let~$\mathcal{I}$ be an instance of the \probName{Equitable Partition} problem. We construct an equivalent instance~$\mathcal{J} = (\agents,(\val_i)_{i\in\agents},\numCoals)$ as follows. We describe the construction using the underlying graph. First, we create a clique with four \emph{set agents}~$x_S$,~$y_S$,~$x_{\lnot{S}}$, and~$y_{\lnot S}$. The edges~$x_Sy_S$ and~$x_{\lnot S}y_{\lnot S}$ are of weight~$-t$, and all other edges are of weight~$-3t$. Next, for every~$a_i\in A$, we introduce an \emph{item agent}~$v_i$ and connect it with an edge of weight~$a_i$ to \emph{all} set agents. Finally, we set~$\numCoals = 2$. It is easy to see that the graph~$G$ is a split graph with~$4$ negative edges and the vertex cover number~$4$.

    For correctness, let~$\mathcal I$ be a \Yes-instance and~$S$ be a sought solution. We set~$\pttn_1 = \{ x_S, y_S \} \cup \{ v_i \mid i\in S \}$ and~$\pttn_2 = \agents\setminus \pttn_1$. Since the item agents have only positive edges, any coalition structure is individually rational for them. For~$x_S$, we have~$\val_{x_S}(\pttn) = \wFn(x_Sy_S) + \sum_{i\in S} \wFn(x_Sv_i) = -t + \sum_{i\in S} a_i = -t + t = 0$. The same argument holds for all the remaining set agents. That is,~$\pttn=(\pttn_1,\pttn_2)$ is an individually rational coalition structure.

    In the opposite direction, let~$\mathcal J$ be a \Yes-instance and~$\pttn$ be an individually rational coalition structure. First, we observe that~$x_S$ is not in the same coalition as~$x_{\lnot S}$ and~$y_{\lnot S}$ and, symmetrically,~$x_{\lnot S}$ is not in the same coalition as~$x_{S}$ and~$y_{S}$. If this were the case, then the utility of any set agent in its coalition would be at most~$-3t + 2t = -t$, which contradicts that~$\pttn$ is individually rational. Hence, without loss of generality,~$\{x_S,y_S\}\subseteq \pttn_1$ and~$\{x_{\lnot S},y_{\lnot S}\} \subseteq \pttn_2$. First, we show that~$|\pttn_1|=|\pttn_2| = \ell+2$. For the sake of contradiction, assume that it is not the case. Then, without loss of generality,~$|\pttn_1| \leq \ell-1$ and consequently,~$\pttn_1$ contains at most~$\ell-1$ item agents. Thus, the utility of~$x_S$ is~$-t + \sum_{v_i\in \pttn_1} a_i = -t + t - \epsilon < 0$, where the~$-\epsilon$ part follows by our assumption that the sum of elements of every subset of~$A$ of size at most~$\ell-1$ is less than~$t$. This contradicts that~$\pttn$ is individually rational, so it must hold that~$|\pttn_1| = |\pttn_2| = \ell+2$. We set~$S = \{ i\in[2\ell] \mid v_i \in \pttn_1 \}$ and claim that~$S$ is a solution for~$\mathcal I$. Indeed, the size of~$S$ is exactly~$\ell$. It remains to show that~$\sum_{i\in S} a_i = t$. If~$\sum_{i\in S} a_i < t$, then~$\sum_{v_i\in \pttn_1} \wFn(x_Sv_i) < t$, meaning that~$\pttn$ is not individually rational for~$x_S$. Thus,~$\sum_{i\in S} a_i \geq t$. If~$\sum_{i\in S} a_i > t$, then~$\sum_{v_i\in\pttn_2} \wFn(w_{\lnot S}v_i) = 2t - \sum_{v_i\in \pttn_1} \wFn(x_Sv_i) \leq t - 1$, which implies that~$\pttn$ is not IR for~$x_{\lnot S}$. Therefore, it has to be the case that~$\sum_{i\in S} a_i = t$, which shows that~$S$ is indeed a solution for~$\mathcal I$.
\end{proof}

Our next result again shows intractability of deciding whether an instance admits an individually rational outcome. However, this time, we focus on instances where agents have complete preferences, i.e., each agent has non-zero utility from any other agent.

\begin{theorem}\label{thm:ASHG:NPc:clique}
    For every~$\numCoals \geq 2$, it is \NPc to decide whether a~$\numCoals$-ASHG or ASHG-SCC~$\Gamma$ admits an individually rational coalition structure, even if the valuations are symmetric, the underlying graph is a clique, and the valuations use six different unary encoded values.
\end{theorem}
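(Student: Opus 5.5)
We prove the theorem by a reduction from \probName{Clique}, which is \NPh in the strong sense. The reduction only needs numbers that are polynomial in the input, so unary encoding is enough. Membership in \NP is immediate, since checking IR of a given coalition structure and checking the size bounds takes polynomial time. Because~$\numCoals$-ASHG is the special case of ASHG-SCC with~$\lVec\equiv 1$ and~$\uVec\equiv\numAgents$, it suffices to prove hardness for~$\numCoals$-ASHGs. We first handle~$\numCoals=2$ and then pad for larger~$\numCoals$.

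\textbf{Construction.} Let~$(H,q)$ be an instance of \probName{Clique} with~$n=|V(H)|$, and let~$M$ be a polynomially large number, say~$M=n^2$. For every~$u\in V(H)$ we create a \emph{vertex agent}~$a_u$. We add two \emph{anchor agents}~$z$ and~$w$ and one \emph{filler agent}~$f$ (several fillers would also work). All pairs are connected by symmetric edges with the following weights:
\begin{itemize}
    \item~$+1$ between~$a_u$ and~$a_{u'}$ if~$uu'\in E(H)$, and~$-2n$ if~$uu'\notin E(H)$;
    \item~$+1$ between every vertex agent and each of~$z$ and~$w$;
    \item~$-q$ between~$z$ and~$w$;
    \item~$M$ between~$f$ and every vertex agent;
    \item~$-M$ between~$f$ and each of~$z$ and~$w$.
\end{itemize}
This is a clique with five distinct unary values. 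For~$\numCoals>2$ we add~$\numCoals-2$ \emph{dummy agents} with a sixth value~$-M$ (or reusing~$-M$) to everyone. No dummy can share a coalition with anyone, so each dummy is a singleton, and the remaining agents must form exactly two coalitions.

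\textbf{Forward direction.} Given a clique~$K$ of size~$q$, we set~$\pttn_1=\{z,w\}\cup\{a_u\mid u\in K\}$ and put the filler together with all other vertex agents into~$\pttn_2$.
\begin{itemize}
    \item The anchors get utility~$-q+q=0$.
    \item Each~$a_u$ in~$\pttn_1$ gets~$2+(q-1)>0$.
    \item Each vertex agent in~$\pttn_2$ gets at least~$M-2n(n-1)\geq 0$ for our choice of~$M$.
    \item The filler gets a positive utility.
\end{itemize}

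\textbf{Backward direction.} We reason as follows.
\begin{itemize}
    \item The filler cannot share a coalition with~$z$ or~$w$, since each of them would get at most~$n-M<0$. With exactly two non-dummy coalitions, and because~$z$ and~$w$ cannot both be alone in a structure of two coalitions containing~$f$, we conclude~$\{z,w\}\subseteq\pttn_1$ and~$f\in\pttn_2$.
    \item IR for~$z$ forces~$\pttn_1$ to contain at least~$q$ vertex agents.
    \item Any vertex agent in~$\pttn_1$ with a non-neighbour in~$\pttn_1$ gets at most~$2+(n-2)-2n<0$.
\end{itemize}
Hence the vertex agents in~$\pttn_1$ form a clique of size at least~$q$ in~$H$.

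\textbf{Main obstacle.} The delicate step is the calibration of the weights. The filler weight must be large enough to make any mixture of non-adjacent vertex agents inside~$\pttn_2$ harmless, yet still unary. The non-edge penalty must dominate the anchor bonus plus all possible clique neighbours. The~$-q$ edge must force exactly the size threshold. I would also double-check the degenerate partitions, for example all vertex agents together with~$f$ while~$z$ and~$w$ sit in separate coalitions. Such a partition would need three non-dummy coalitions, which the coalition count rules out; if not, I would add a small gadget such as a second filler to enforce the split.
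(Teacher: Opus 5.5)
Your reduction works once one constant is fixed. With $M=n^2$, the bound $M-2n(n-1)\geq 0$ that you use for the leftover vertex agents in $\pttn_2$ is false for every $n\geq 3$, so non-adjacent agents placed with the filler could violate IR. Take $M=2n^2$ instead; any $M\geq 2n(n-1)$ with $M>n$ works, and all values stay polynomial. With that change every step goes through, including the padding for $\numCoals>2$. Each dummy is forced to be a singleton, so exactly two coalitions remain for the other agents, and this also rules out the degenerate partition you were worried about.

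Your route differs from the paper's mainly in where the size threshold comes from. The paper also reduces from \probName{Clique} with $\numCoals=2$, but it uses four special agents $a,b,c,d$ and makes all vertex--vertex values positive ($2$ on edges, $1$ on non-edges). A vertex agent on the clique side gets $-2t$ from $a$ and $+2$ from $b$, so it needs $t-1$ weight-$2$ partners there. Meanwhile $\wFn(ab)=2t^2$ and the pair $c,d$ with $\wFn(cd)=t-\eta$ only bound the clique side from \emph{above}.

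As written, nothing in the paper's construction forces the clique side to contain any vertex agent at all. Indeed $\{a,b\}$ versus $V(H)\cup\{c,d\}$ is always IR: the utilities are $2t^2,2t^2,t,t$, and positive for all vertex agents. So the paper's backward direction tacitly assumes $U\neq\emptyset$.

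Your design avoids this. The $-q$ edge between the anchors supplies exactly the missing lower bound. The $-2n$ non-edge penalty makes the clique property a direct consequence of IR for the vertex agents on the anchors' side. The heavy filler neutralizes those penalties in its own coalition and forces both anchors into the other coalition. You also treat $\numCoals>2$ and membership in \NP explicitly, which the paper leaves implicit.
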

\begin{proof}
    We will provide a reduction from the~$t$-\textsc{Clique} problem, where given an instance~$(H,t)$, where~$H$ is a graph and~$t\in \mathbb{N}$, the goal is to decide if~$H$ contains a clique on~$t$ vertices. This problem is well known to be \NP-complete~\citep{GareyJ1979}.%
    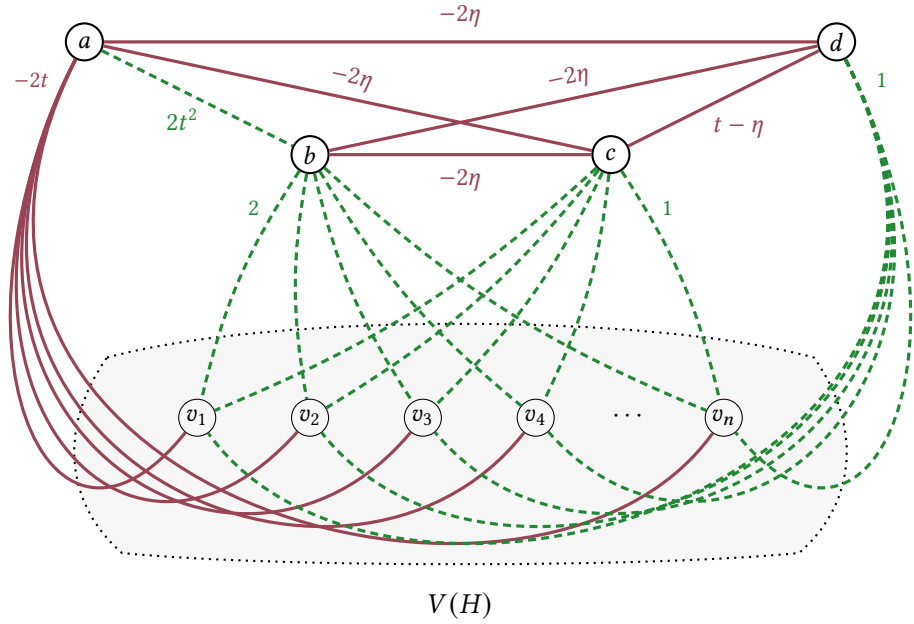
\begin{figure}
        \centering
        \begin{tikzpicture}[
            every node/.style={draw, circle, inner sep=0pt, minimum width=14pt},
            elabel/.style={draw=none, fill=white, inner sep=1.5pt, font=\small},
            redge/.style={very thick,cbRed},
            gedge/.style={very thick, densely dashed,cbGreen},
          ]
        
          \node[thick] (a) at (0, 0)    {$a$};
          \node[thick] (b) at (3, -1.5) {$b$};
          \node[thick] (c) at (7, -1.5) {$c$};
          \node[thick] (d) at (10, 0)   {$d$};
        
          \draw[ultra thick,redge]
              (a) -- (c) node[midway, above, sloped, draw=none,yshift=-2pt] {$-2\eta$}
              (b) -- (c) node[midway, below, draw=none] {$-2\eta$}
              (b) -- (d) node[midway, above, sloped, draw=none,yshift=-2pt] {$-2\eta$}
              (a) -- (d) node[midway, above, draw=none] {$-2\eta$};
          \draw[gedge] (a) -- (b) node[midway, draw=none, below, xshift=-0.2cm] {$2t^2$};
          \draw[redge] (c) -- (d) node[midway, draw=none, below, xshift=0.2cm] {$t-\eta$};
        
          \node (v1) at (1.5, -5)    {$v_1$};
          \node (v2) at (3, -5)   {$v_2$};
          \node (v3) at (4.5, -5)  {$v_3$};
          \node (v4) at (6, -5)    {$v_4$};
          \node[draw=none] (vdots) at (7.25, -5) {$\cdots$};
          \node (vn) at (8.5, -5)    {$v_n$};
        
          \begin{scope}[on background layer]
            \draw[fill=gray!7,thick,dotted]
              (0.3, -4.2) .. controls (2.5,-3.6) and (7.5,-3.6) .. (9.7, -4.2)
              .. controls (10.4,-5.2) and (10.2,-6.2) .. (9.5, -6.8)
              .. controls (7.5,-7) and (2.5,-7) .. (0.5, -6.8)
              .. controls (-0.2,-6.2) and (-0.4,-5.2) .. cycle;
            \node[draw=none, font=\large] at (5, -7.5) {$V(H)$};
          \end{scope}
        
          \foreach \i in {1,2,3,4,n} {
            \draw[bend right=10,gedge] (b) to (v\i);
            \draw[bend left=10,gedge] (c) to (v\i);
          }
          \node[draw=none, font=\small,gedge] at (2.25, -2.2) {$2$};
          \node[draw=none, font=\small,gedge] at (7.75, -2.2) {$1$};
        
          \draw[redge] (a) .. controls (-2.0,-4.0) and (-0.5,-7.5) .. (v1);
          \draw[redge] (a) .. controls (-2.3,-4.5) and ( 0.5,-7.8) .. (v2);
          \draw[redge] (a) .. controls (-2.6,-5.0) and ( 2.0,-8.0) .. (v3);
          \draw[redge] (a) .. controls (-2.8,-5.5) and ( 3.5,-8.2) .. (v4);
          \draw[redge] (a) .. controls (-3.0,-6.0) and ( 6.0,-8.5) .. (vn);
          \node[draw=none, font=\small,redge] at (-0.7, -0.5) {$-2t$};
        
          \draw[gedge] (d) .. controls (12.0,-4.0) and (10.5,-7.5) .. (vn);
          \draw[gedge] (d) .. controls (12.3,-4.5) and ( 8.5,-7.8) .. (v4);
          \draw[gedge] (d) .. controls (12.6,-5.0) and ( 7.0,-8.0) .. (v3);
          \draw[gedge] (d) .. controls (12.8,-5.5) and ( 5.5,-8.2) .. (v2);
          \draw[gedge] (d) .. controls (13.0,-6.0) and ( 4.0,-8.5) .. (v1);
          \node[draw=none, font=\small,gedge] at (10.6, -0.5) {$1$};
        \end{tikzpicture}
        \caption{An illustration of the construction used to prove \Cref{thm:ASHG:NPc:clique}. All green (dashed) edges have positive weight and red (solid) edges have negative weight. Concrete values are depicted next to them. Every pair of $v_i$ and $v_j$ is connected by an edge of weight two if $\{v_i,v_j\}\in E(H)$ and of weight one otherwise.}
        \label{fig:clique}
    \end{figure}

    \smallskip
    \noindent\textbf{Construction.}\hspace{0.25cm} See \Cref{fig:clique} for an illustration. Let~$\eta = |V(H)|$.  We construct a complete graph~$G$ with~$V(G) = V(H)\cup \{a,b,c,d\}$. Moreover, for~$u,v\in V(H)$,  if~$uv \in E(H)$ then~$\wFn(uv) = 2$, else~$\wFn(uv) = 1$. For each~$u\in V(H)$,~$\wFn(av) = -2t$,~$\wFn(bv) = 2$,~$\wFn(cv) = \wFn(dv) = 1$. Next, we set~$\wFn(ab) = 2t^2, \wFn(cd) =t-\eta$, and~$\wFn(ac) = \wFn(ad) = \wFn(bc) = \wFn(bd)= -2\eta$. We remark here that we can safely assume~$t-\eta <0$, as otherwise we can solve our instance of~$t$-\textsc{Clique} in polynomial time. Finally, we have~$k =2$.  This completes our reduction. 

    \smallskip
    \noindent\textbf{Correcteness.}\hspace{0.25cm}
    In one direction, we prove that if~$H$ has a clique on~$t$ vertices, then~$G$ admits an IR coalition structure~$\pi=(\pi_1,\pi_2)$. Let~$U \subset V(H)$ induce a clique on~$t$ vertices. We set~$\pi_1 = U \cup \{a,b\}$ and~$\pi_2 = V(G) \setminus \pi_1$. First, we show that for each~$u\in \pi_1$,~$\val_u(\pi_1) \geq 0$. In particular,~$\val_a(\pi_1) = 0$,~$\val_b(\pi_1) = 2t^2+2t$, and for each~$u\in \pi_i\setminus \{a,b\}$,~$\val_u(\pi_1)= 0$. Second, to see that for each~$u\in \pi_2$,~$\val_u(\pi_2)\geq 0$ observe that for each vertex~$w\in \pi_2\setminus \{c,d\}$, only positive edges are incident to~$w$ in~$G[\pi_2]$ and~$\val_c(\pi_2) = \val_d(\pi_2) = 0$. Thus,~$\pi=(\pi_1,\pi_2)$ is IR stable.

    In the other direction, let~$\pi=(\pi_1,\pi_2)$ be an IR stable solution for~$G$. Firstly, let us assume that~$a\in \pi_1$.  We begin by observing some necessary conditions on~$\pi$ in the following claims.
    \begin{claim}\label{C:C1}
       ~$b\in \pi_1$ and~$c,d\in \pi_2$.
    \end{claim}
    \begin{claimproof}
        Observe that~$c$ and~$d$ value positively only vertices that correspond to vertices of~$H$ and each such valuation is~$1$. Thus, the total amount of positive valuation any of~$c,d$ can obtain in any coalition is~$+\eta$, and since~$\wFn(ac) = \wFn(ad) = \wFn(bc) = \wFn(bd) = -2\eta$, any of~$c,d$ cannot be in the same coalition as~$a,b$ while maintaining individual rationality. Thus, clearly~$c,d\in \pi_2$, and hence,~$b\in \pi_1$ (since~$b$ cannot be in the same coalition as~$c,d$).
    \end{claimproof}

    \begin{claim}\label{C:Csize}
       ~$|\pi_2| \geq \eta-t+2$.
    \end{claim}
    \begin{claimproof}
        Since~$\{c,d\} \subseteq \pi_2$, and for each vertex~$w\in \pi_2\setminus \{d\}$, $\val_c(w)=1$, and~$\val_c(d) = t-\eta$, the total number of vertices in~$\pi_2$ to ensure the individual stability of~$c$ is at least~$\eta-t+2$: $c,d$ and~$\eta-2$ agents corresponding to vertices of~$V(H)$.
    \end{claimproof}

    Now consider~$\pi_1$. Due to \Cref{C:Csize},~$|\pi_1| \leq t+2$. Let~$U = \pi_1\setminus \{a,b\}$. We establish in the following that~$|U| = t$ and~$U$ induces a clique in~$H$ to complete our proof. Clearly~$|U| \leq t$. Now, for any vertex~$u\in U$, we have that~$\val_u(\pi_1)=\val_u(a)+ \val_u(b) + \val_u(U\setminus \{u\})= -2t+2+ \val_u(U\setminus \{u\})$. Since~$\pi$ is individual rational, we have that~$\val_u(U\setminus \{u\}) \geq 2t-2$. Since each vertex in~$U\setminus \{u\}$ is coming from~$H$, for any edge~$uw$, where~$w\in U\setminus \{u\}$,~$\wFn(uw)\leq 2$, we have that~$|U\setminus \{u\}|\geq t-1$ and each~$w$ is adjacent to~$u$ in~$H$ (as only then~$\wFn(uw)\leq 2$). Thus~$|U|=t$ and~$H$ induces a clique in~$H$. This completes the proof for the hardness of ASHG for cliques, even if the valuations are symmetric,~$\numCoals=2$, and the valuations are encoded in unary.
\end{proof}

\section{Preferences with Small Vertex Cover}

The proof of \Cref{thm:ASHG:VC:NPhard} relies on valuations that are exponential in the number of agents. If we restrict the valuations to be polynomially bounded by the number of agents, we obtain a polynomial-time algorithm for every graph with constant size vertex cover. The algorithm is based on dynamic programming over the agents outside of the vertex cover.

\begin{theorem}\label{thm:ASHG:VC:unary:XP}
    If the weights are encoded in unary, there is an algorithm running in~$n^\Oh{\vc(G)}$ time, where~$\vc(G)$ is the vertex cover number of~$G$, that decides whether a~$\numCoals$-ASHG or ASHG-SCC~$\Gamma$ admits an IR coalition structure.
\end{theorem}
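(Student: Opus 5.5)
Let $U$ be a minimum vertex cover of $G$ with $|U| = \vc(G) =: p$. A $2$-approximation, computable in polynomial time, suffices for the running-time bound. Then $I = \agents\setminus U$ is an independent set, so every edge of $G$ has an endpoint in $U$.

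The first step is to guess how $U$ is distributed among the coalitions. Only the coalitions meeting $U$ are structurally interesting, and there are at most $p$ of them. We guess a partition of $U$ into at most $p$ nonempty parts, together with an injective assignment of these parts to coalition indices in $[\numCoals]$. Since $\uVec$ is sorted, we could normalize indices by upper bound, but the straightforward bound already suffices: there are at most $p^p\cdot \numCoals^p \le n^{\Oh{p}}$ guesses. The remaining coalitions contain only agents of $I$. Agents of $I$ have edges only to $U$, so an agent $v\in I$ placed in a coalition with no $U$-agent has utility $0$ and is IR. The only constraints on these "pure" coalitions are therefore their sizes, namely at least $\max(1,\lVec(j))$ and at most $\uVec(j)$. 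Only the total number of $I$-agents sent to pure coalitions matters, and its feasibility is an interval condition: the sum of lower bounds versus the sum of upper bounds over pure indices.

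The second step handles a fixed guess with parts $P_1,\dots,P_q$ ($q\le p$) placed at indices $j_1,\dots,j_q$. Agent $u\in P_r$ has utility $\sum_{w\in P_r}\val_u(w) + \sum_{v\in I\cap \pttn_{j_r}} \val_u(v)$. Because edges between $I$-agents do not exist, the IR condition for $u$ is a linear threshold on the multiset of $I$-agents joining $P_r$. An agent $v\in I$ joining $P_r$ is IR iff $\sum_{w\in P_r}\val_v(w)\ge 0$, which is checkable per $(v,r)$ independently of the other agents. Hence we run a dynamic program over the agents of $I$ one by one, each choosing a part $r\in[q]$ (allowed only if $v$ is IR there) or "pure". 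The state records:
\begin{itemize}
\item for every $u\in U$, the accumulated value $\sum \val_u(v)$ over the processed $I$-agents placed with $u$;
\item for every $r$, the number of $I$-agents placed with $P_r$;
\item the number of agents sent to pure coalitions.
\end{itemize}
With unary weights each accumulated value lies in $[-nW, nW]$ with $W\le$ input size, so it is polynomial. Moreover, values below $-\max$ possible recovery can be clipped. There are $p$ value coordinates and $q+1\le p+1$ count coordinates, so the table has $n^{\Oh{p}}$ entries, and each transition is polynomial. At the end we accept iff some reachable state satisfies all of the following:
\begin{itemize}
\item every $u\in U$ has $\sum_{w\in P_r}\val_u(w)$ plus its accumulated value at least $0$;
\item each $|P_r|$ plus its count lies in $[\lVec(j_r),\uVec(j_r)]$;
\item the pure count is feasible for the remaining $\numCoals-q$ indices.
\end{itemize}
For $\numCoals$-ASHG the last condition simply requires pure count $\ge \numCoals-q$. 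The case $q>\numCoals$ is discarded.

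Correctness is a direct two-way translation: any IR structure induces a guess and a run of the DP, and conversely an accepting state yields an assignment whose pure agents can be split greedily to meet the interval bounds, since sizes are arbitrary within bounds. The main obstacle I expect is keeping the state polynomial. This rests on the unary encoding, which bounds each of the $p$ accumulated utilities by $\mathrm{poly}(n)$ and gives the $n^{\Oh{p}}$ bound. It also requires separating the pure coalitions correctly so that their individual sizes need not be tracked. Together these give the overall running time $n^{\Oh{\vc(G)}}$.
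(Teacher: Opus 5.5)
Your proposal is correct and is essentially the paper's argument. You guess how the vertex cover is distributed over the coalitions, then run a DP over the independent agents whose state records each cover agent's accumulated utility and the coalition sizes, lumping coalitions without cover agents into a single count; the paper uses this lumping only for ASHG-SCC with $k>\vc(G)$, and otherwise tracks all $k$ sizes or answers yes outright, whereas you apply it uniformly. One small fix: for $k$-ASHGs the condition ``pure count $\ge k-q$'' must also force the pure count to be $0$ when $q=k$, as your general interval condition already does (the empty sum of upper bounds is $0$). Otherwise an independent agent that is IR in no part gets accepted with no coalition to host it.
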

\begin{proof}
    \newcommand{\DP}{\operatorname{DP}}
    \newcommand{\uu}{\vec{\mathbf{u}}}
    \newcommand{\nVec}{\vec{\mathbf{s}}} 
    Let~$C$ be a vertex cover of~$G$,~$I = V(G)\setminus C$, and~$\vartheta = |C|$. First, assume that~$\numCoals \leq \vartheta$. In this case, we first guess (by guessing we mean exhaustively trying all possible solutions) a partitioning~$\pttn_C$ of the vertices of~$C$ in a hypothetical solution. Then, we verify that our guess is correct. To do so, we use a dynamic-programming sub-procedure. Specifically, we fix an arbitrary ordering~$v_1,\ldots,v_{n-\vartheta}$ of vertices of~$I$ and create a dynamic-programming table~$\DP[i,\nVec,\uu]$, where~$i\in[n-\vartheta]_0$,~$\nVec=(s_1,\ldots,s_\numCoals)$ is the number of agents in each coalition, and~$\uu = (u_1,\ldots,u_\vartheta)$ is a vector representing the current utility of each agent in the vertex cover. The table stores \texttt{true} if it is possible to extend the partition~$\pttn_C$ with vertices~$v_1,\ldots,v_i$ such that the resulting partition~$\pttn^{i,\nVec,\uu}$ is (a) IR for every agent~$v_1,\ldots,v_i$, (b) the size of every coalition~$\pttn^{i,\nVec,\uu}_j$ in~$\pttn^{i,\nVec,\uu}$ is exactly~$s_j$, and (c)~$\val_{w_j}(\pttn^{i,\nVec,\uu}) = u_j$ for every~$w_j\in C$. Otherwise,~$\DP$ stores \texttt{false}.

    The computation is then defined as follows. For~$i=0$, which represents an auxiliary basic step when no vertex of~$I$ is used, we have
    \begin{equation*}
        \DP[0,\nVec,\uu] = \texttt{true} \iff \nVec = (|\pttn_j^C|)_{j\in[\numCoals]} \land \uu = (\val_{w_i}(\pttn^C))_{i\in[\vartheta]}.
    \end{equation*}
    It is easy to see that for~$i=0$, only one cell is set to \texttt{true}; namely, the one corresponding to the guessed partition of vertex cover agents~$\pttn^C$.

    For every~$i\geq 1$, the computation is then defined as follows. By~$(x,\nVec_{-i})$ we mean a vector created from~$\nVec$ by replacing its~$i$th element with the value~$x$.
    \begin{equation*}
        \DP[i,\nVec,\uu] = \bigvee_{j\in [\numCoals]} \sum_{w \in \pttn^C_j} \val_{v_i}(w) \geq 0\, \land \DP\left[i-1,(s_j - 1,\nVec_{-j}), ( u_\ell - v_{w_\ell}(v_i)\cdot [| w_\ell \in \pttn_j^C |] )_{\ell\in[\vartheta]}\right],
    \end{equation*}
    where~$[|w_\ell \in \pttn_j^C|]$ evaluates to~$1$ if the condition is satisfied and to~$0$ otherwise.

    We prove the correctness using the following two claims. First, we show that whenever~$\DP[i,\nVec,\uu]$ is set to \texttt{true}, there exists a corresponding partial partition~$\pttn^{i,\nVec,\uu}$.

    \begin{claim}
        Let~$(i,\nVec,\uu)$ be a triple such that~$\DP[i,\nVec,\uu] = \texttt{true}$. Then, there exists a partial partition~$\pttn^{i,\nVec,\uu}$ of~$C\cup\{v_1,\ldots,v_i\}$ such that~$\pttn^C_j \subseteq \pttn^{i,\nVec,\uu}_j$ for every~$j\in[\numCoals]$ and satisfying properties (a)-(c).
    \end{claim}
    \begin{claimproof}
        We prove the claim using induction over~$i$. If~$i=0$, then, by the definition of~$\DP$, it must hold that~$\nVec = (|\pttn_j^C|)_{j\in[\numCoals]}$ and~$\uu = (\val_{w_i}(\pttn^C))_{i\in[\vartheta]}$. We set~$\pttn^{i,\nVec,\uu} = \pttn^C$ and claim that it is a valid partial partition. Clearly, as~$i=0$, no agent of~$I$ is expected to be assigned to any of the coalitions, so the property (a) is trivially satisfied. For property (b), the size of every~$\pttn^{i,\nVec,\uu}_j$ is exactly~$|\pttn^C_j| = n_j$. Finally, the utility of every agent~$w_\ell\in C$ is exactly~$\val_{w_\ell}(\pttn^C) = u_\ell$, which again shows that the property (c) is satisfied. Therefore, the computation for~$i=0$ is correct.

        Therefore, let~$i \geq 1$ and assume that~$\DP[i-1,\nVec',\uu']$ is computed correctly for every~$\nVec'$ and~$\uu'$. Then, there exists~$j\in [k]$ such that~$\sum_{w\in \pttn_j^C} \geq 0$ and~$\DP[i-1,(s_j-1,\nVec_{-j}),(u_\ell - v_{w_\ell}(v_i))\cdot [|w_\ell\in\pttn^C_j|])_{\ell\in[\vartheta]}] = \texttt{true}$. Since~$\DP[i-1,(s_j-1,\nVec_{-j}),(u_\ell - v_{w_\ell}(v_i))\cdot [|w_\ell\in\pttn^C_j|])_{\ell\in[\vartheta]}]$ is computed correctly by our assumption, there must exist an extension of~$\pttn^C$ with agents~$v_1,\ldots,v_{i-1}$ which is IR for all these agents, the size of every coalition~$\pttn_\ell$ is~$n_\ell$ with exception of~$\pttn_j$ which is of size~$n-1$, and the utility of every vertex cover agent~$w_\ell\in C$ is~$u_\ell$ if~$w_\ell\not\in \pttn^C_j$ and~$u_\ell - v_{w_\ell}$ if~$w_\ell\in \pttn^C_j$. Let us call this extension~$\pttn'$. We define~$\pttn^{i,\nVec,\uu} = (\pttn'_j\cup\{v_i\},\pttn'_{-j})$ and claim that it is a valid extension with respect to~$(i,\nVec,\uu)$. We changed only the coalition~$j$, so for all agents that are part of the different coalitions and all other coalitions, the properties (a)-(c) are clearly satisfied, as they were satisfied already by~$\pttn'$. Also, by adding an agent~$v\in I$ to a coalition, we cannot break IR for other agents of~$\pttn'_j\cap I$, as they value the agent~$v$ at zero. The same holds for~$v$, which additionally satisfies that the sum of valuations towards all agents in~$\pttn^C_j$ is non-negative. That is, the property (a) is satisfied. The size of~$\pttn^{i,\nVec,\uu}_j$ is~$|\pttn'_j + \{v_i\}| = |\pttn'_j| + 1 = (n_j - 1) + 1 = n_j$, so property (b) is also satisfied. Finally, let~$w_\ell\in\pttn^C$. We have that~$\val_{w_\ell}(\pttn^{i,\nVec,\uu}) = \val_{w_\ell}(\pttn'_j\cup\{v_i\}) = \val_{w_\ell}(\pttn'_j) + \val_{w_\ell}(v_i) = (u_\ell - \val_{w_\ell}(v_i)) + \val_{w_\ell}(v_i) = u_\ell$, finishing the proof.
    \end{claimproof}

    In the opposite direction, we show that whenever for some triple~$(i,\nVec,\uu)$ an extension of~$\pttn^C$ satisfying properties (a)-(c) exists, our dynamic programming table stores in~$\DP[i,\nVec,\uu]$ value \texttt{true}.

    \begin{claim}
        Let~$(i,\nVec,\uu)$ be a triple such that there exists an extension~$\pttn^{i,\nVec,\uu}$ of~$\pttn^C$ such that it satisfies all properties (a)-(c). Then, we have~$\DP[i,\nVec,\uu] = \texttt{true}$. 
    \end{claim}
    \begin{claimproof}
        We again proceed by induction over~$i$. If~$i=0$, no agent~$I$ is assumed, so it must be the case that~$n_j = |\pttn^C_j|$ for every~$j\in[\numCoals]$ and~$u_\ell = \val_{w_\ell}(\pttn^C)$, which is satisfied only by~$\pttn^C$ itself. However, by the definition of the computation, in this case our table stores \texttt{true}, so the basic step is clearly valid.

        Now, let~$i \geq 1$ and assume that the claim holds for~$i-1$. Let~$\pttn^{i,\nVec,\sVec}$ by an extension of~$\pttn^C$ corresponding to~$(i,\nVec,\uu)$ and satisfying properties (a)-(c), and let~$v_i\in\pttn^{i,\nVec,\sVec}_j$ for some~$j\in[\numCoals]$. Since~$\pttn^{i,\nVec,\sVec}$ satisfies property (a), the partition is IR for every~$v_{i'}$,~$i'\in[i]$. Most importantly, it must be the case that~$\val_{v_i}(\pttn^C_j) \geq 0$, which, by additivity, holds if and only if~$\sum_{w\in\pttn^C_j} \val_{v_i}(w) \geq 0$. We construct~$\pttn' = (\pttn^{i,\nVec,\uu}_j\setminus\{v_i\},\pttn^{i,\nVec,\uu})$ and claim that it is a valid extension of~$\pttn^C$ for~$(i-1,(s_{j}-1,\nVec_{-j}),(u_\ell - v_{w_\ell}(v_i)\cdot [| w_\ell \in \pttn_j^C |] )_{\ell\in[\vartheta]})$ and hence, by the induction hypothesis,~$\DP[i-1,(s_{j}-1,\nVec_{-j}),(u_\ell - v_{w_\ell}(v_i)\cdot [| w_\ell \in \pttn_j^C |] )_{\ell\in[\vartheta]}] = \texttt{true}$.
        If it is indeed the case, then~$\DP[i,\nVec,\uu] = \texttt{true}$ by the definition of the computation, as all properties of the recurrence are satisfied for~$j$.
    \end{claimproof}

    Once the dynamic programming table~$\DP$ is computed, we just check whether there exists a pair~$(\nVec,\uu)$ with~$s_j \geq 1$ for every~$j\in[\numCoals]$ ($\nVec = \sVec$ in the case of ASHGs-SCC) and~$u_i \geq 0$ for every~$i\in[\vartheta]$ such that~$\DP[n-\vartheta,\nVec,\uu] = \texttt{true}$. In other words, whether we can extend the initial partial partition~$\pttn^C$ with agents of~$I$ such that the resulting partition~$\pttn$ (a) is IR for all agents in~$I$ (by the definition of~$\pttn^{n-\vartheta,\nVec,\uu}$) and, by our choice of~$\uu$, also for all agents in~$C$ and (b) by our choice of~$\nVec$, all coalitions are of the correct size (non-empty in case of ASHGs and~$\sVec$ in case of ASHGs-SCC). Also, observe that, apart from the unary encoding, we have no assumption on the valuations, so our algorithm works even in the case of non-symmetric valuations.

    For the running time, there are~$\Oh{\numAgents}\cdot \numAgents^\numCoals \cdot (\Delta\cdot \val_{\max})^\Oh{\vartheta}$
    different cells of~$\DP$. Since~$\numCoals \leq \vartheta$,~$\Delta \in \Oh{\numAgents}$, and~$\val_{\max} \in n^\Oh{1}$, the number of cells can be upper-bounded by~$\numAgents^\Oh{\vartheta}$. As every cell can be computed in~$\Oh{\numAgents}$ time, the overall running time is~$\numAgents^\Oh{\vartheta}$, which is clearly in \XP.

    If~$\numCoals > \vartheta$, then we distinguish two cases. First, in the case of ASHGs, we observe that if~$\numCoals > \vartheta$, then~$\mathcal{I}$ is always a \Yes-instance. Specifically, for every~$v\in C$, we create a coalition~$\{v\}$. By this, we obtain~$\vartheta$ coalitions that are clearly individually rational. Then, we arbitrarily partition the vertices of~$I$ into~$\numCoals-\vartheta$ coalitions. Since~$I$ is an independent set of~$G$, the utility of every~$v\in I$ is zero, which is the same as being in the singleton coalition. Thus, all constructed coalitions are individually rational.

    It remains to show how to deal with the case of ASHGs-SCC. For this setting, we use the same dynamic programming as in the case of~$\numCoals \leq \vartheta$. Again, we guess a partial partitioning~$\pttn^C$ of~$C$. There are~$k^\Oh{\vartheta}\in n^\Oh{\vartheta}$ different such partitions~$\pttn_C$. Then, we adjust the dynamic programming from the case of~$\numCoals \leq \vartheta$ as follows. Instead of having a vector~$\nVec = (s_1,\ldots,s_\numCoals)$, we have a vector~$\nVec = (s_1,\ldots,s_{\numCoals'},s_{\numCoals'+1})$, where~$\numCoals'\leq \vartheta$ is the number of coalitions that are non-empty according to~$\pttn^C$. Let~$\operatorname{id}\colon [\numCoals]\to[\numCoals'+1]$ be a function such that for every~$\pttn^C_j$ that is nonempty, it returns an unique number from the interval~$[1,\numCoals']$, and for every~$\pttn^C_j$ that is empty, it returns~$\numCoals'+1$. The semantics of~$\nVec$ is that for every~$i\in[\numCoals']$,~$s_i$ is the number of agents in coalition~$\pttn_{\operatorname{id^{-1}(i)}}$ (i.e., in coalitions containing at least one vertex cover agent), and~$s_{\numCoals'+1}$ contains the number of agents that are in coalitions only with other agents of~$I$. The crucial observation here is that we can split the agents which are not in a coalition with vertex cover agents arbitrarily. Now, we run the same dynamic programming algorithm as in the case of~$\numCoals \leq \vartheta$. Finally, we ask whether there exist~$\nVec = (n_{\operatorname{id}^{-1}(1)},\ldots,n_{\operatorname{id}^{-1}(\numCoals')},\numAgents - \sum_{i=1}^{\numCoals'}n_{\operatorname{id}^{-1}(i)})$ and~$\uu = (u_1,\ldots,u_\vartheta)$ with~$u_i \geq 0$ for every~$i\in[\vartheta]$ such that~$\DP[\numAgents - \vartheta, \sVec, \uu] = \texttt{true}$. Again, there are~$\Oh{n} \cdot n^\Oh{\vartheta} \cdot (\Delta\cdot \val_{\max})^\Oh{\vartheta} \in \numAgents^\Oh{\vartheta}$ cells and each can be computed in polynomial time, which leads the promised running time. The correctness of this approach is then analogous to the correctness of the case with~$\numCoals \leq \vartheta$.
\end{proof}

The algorithm from the previous result shows that our problems are in \XP when parameterized by the vertex cover number of~$G$. In the following, we show that the algorithm is tight. Not only that, under standard theoretical assumption, no \FPT algorithm is possible, but also, the algorithm is asymptotically best possible.

\begin{theorem}\label{thm:ASHG:VC:unary:Whard}
    It is \Wh when parameterized by the number of coalitions~$\numCoals$, the number of negative edges, and the vertex cover number of~$G$, combined, to decide whether a~$\numCoals$-ASHG or ASHG-SCC~$\Gamma$ admits an individually rational coalition structure, even if the valuations are symmetric, encoded in unary, and~$G$ is a split graph. Moreover, unless ETH fails, there is no algorithm running in~$\numAgents^{o(\vc(G))}$ for these problems.
\end{theorem}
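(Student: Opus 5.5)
We reduce from \probName{Unary Bin Packing} parameterized by the number of bins~$b$, which is \Wh and admits no $f(b)\cdot m^{o(b)}$ algorithm under ETH (Jansen, Kratsch, Marx, Schlotter). The instance consists of item sizes $a_1,\ldots,a_m$ encoded in unary, $b$ bins, and a capacity~$B$. Without loss of generality we pad with unit items so that $\sum_i a_i = bB$. The question then becomes whether the items can be split into $b$ groups, each of total size exactly~$B$.

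The construction generalises the gadget of \Cref{thm:ASHG:VC:NPhard}. For every bin $j\in[b]$ we create two bin agents $x_j,y_j$ joined by an edge of weight~$-B$. Bin agents of different bins are joined by edges of weight $-3bB$ (any value exceeding the total positive weight suffices). For every item $i$ we add an item agent $v_i$ adjacent to all $2b$ bin agents with weight~$a_i$. We set $\numCoals=b$. The resulting graph is a split graph: the bin agents form a clique and the items form an independent set. It has $\vc(G)\le 2b$ and $\Oh{b^2}$ negative edges, and all weights are polynomial in the unary input. Thus every parameter named in the statement is bounded by a function of~$b$, and the blow-up is linear.

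The forward direction is direct. Given a packing, place $x_j,y_j$ together with the items of bin~$j$ in coalition~$j$. Each bin agent then has utility $-B+B=0$, and item agents have only positive edges. For the converse we argue in three steps.
\begin{enumerate}
\item No coalition contains bin agents from two different bins, because such an agent would have utility at most $-3bB+bB<0$.
\item Each coalition contains at most one pair $\{x_j,y_j\}$. A pair could also be split, and we must rule out coalitions that contain no bin agent at all. Here is the main counting argument. If $x_j$ is without $y_j$, it can still be happy with item weight $\geq 0$, which breaks the tight accounting. To prevent this we add, for each bin agent, a penalty that forces it to pair up. The cleanest fix is to demand an exact number of coalitions $\numCoals=b$ and make every bin agent need item weight at least~$B$ regardless. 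We drop $y_j$ and instead give each bin agent $x_j$ a private pendant negative neighbour structure.
\item Given the pairing, the total item weight is $bB$ and each bin agent needs at least~$B$ from items in its own coalition. It follows that every coalition receives exactly~$B$, which is a valid packing.
\end{enumerate}

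The simplest version that I expect to work is the following. Use a single bin agent $x_j$ per bin, put a $-3bB$ edge between different $x_j$'s, and attach one extra \emph{anchor} agent $z_j$ with $\wFn(x_jz_j)=-B$, where $z_j$ has no other edges. Then $x_j$ needs item weight $\ge B$ (or $\ge 0$ if $z_j$ is elsewhere). The risk is that $x_j$ abandons $z_j$, so the anchors need to be arranged such that every coalition contains some anchor.

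The main obstacle is step~2: guaranteeing that each bin agent is charged exactly $-B$, so that a total of $bB$ item weight is tightly needed. I would first try to keep the symmetric pair $x_j,y_j$ from \Cref{thm:ASHG:VC:NPhard}, with $\numCoals=b$. Since the $b$ pairs use up $b$ coalitions and different pairs are mutually incompatible, a pigeonhole argument shows that each pair lies in its own coalition, provided no coalition consists solely of items. I would enforce this through the ASHG-SCC size bounds or by adding the lower-bound trick. For $\numCoals$-ASHGs I would argue that every coalition must contain bin agents: there are $2b$ bin agents, and if split agents exist they have slack. To close this gap I would make $\wFn(x_jy_j)$ slightly positive, give each bin agent an individual $-B$ penalty via a dedicated negative edge to an item-like agent, or otherwise ensure each bin agent needs weight~$B$ in every case.

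The ETH bound then follows immediately: an $n^{o(\vc(G))}$ algorithm would solve \probName{Unary Bin Packing} in $m^{o(b)}$ time.
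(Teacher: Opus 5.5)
\textbf{There is a gap.} Your pair gadget matches the paper's construction exactly: $x_j,y_j$ joined by weight $-B$, prohibitive negative weights between different bins, every item agent adjacent to all $2b$ bin agents with weight $a_i$, and $\numCoals=b$. Your steps~1 and~3 are correct. The gap is step~2, which you never close. You think a split pair $x_j,y_j$, or a coalition containing only item agents, must be excluded by some extra device, and you leave that device unspecified. No device is needed. By step~1, every coalition contains bin agents of at most one bin, and every bin has its agents in at least one coalition. So the sets of coalitions met by the $b$ bins are nonempty, pairwise disjoint subsets of the $\numCoals=b$ coalitions. Each therefore has exactly one element, and together they cover all coalitions. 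Consequently $x_j$ and $y_j$ always share a coalition, every coalition contains exactly one pair, and no coalition consists only of items. Your ``proviso'' holds automatically, and step~3 then gives the packing. This counting is the paper's pigeonhole step.

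The modifications you propose instead would not work as stated. In the anchor variant, the only edge of $z_j$ is $\wFn(x_jz_j)=-B$. With symmetric valuations, $z_j$ has utility $-B$ whenever it is with $x_j$. So in every IR structure $z_j$ sits elsewhere, $x_j$ is never charged, and every instance maps to a \Yes-instance. Making $\wFn(x_jy_j)$ positive likewise removes the penalty that drives the accounting. As written, then, the converse direction is missing. The fix is to keep the original pair gadget and add the counting argument above.

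A smaller point concerns the ETH part. The lower bound of Jansen, Kratsch, Marx and Schlotter for \probName{Unary Bin Packing} is of the form $f(b)\cdot m^{o(b/\log b)}$. That only yields $\numAgents^{o(\vc(G)/\log\vc(G))}$. The tight $\numAgents^{o(\vc(G))}$ bound needs a source problem with a tight $m^{o(b)}$ lower bound; the paper takes it from a recent $\mu^{o(B)}$ bound for \probName{Balanced Bin Packing}. Reducing from plain bin packing is otherwise fine for \Whness, and the ASHG-SCC case follows because $\numCoals$-ASHGs are a special case.
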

\begin{proof}
    We show the result by a parameterized reduction from the \probName{Balanced Bin Packing} problem. Here, we are given a multiset~$A=\{a_1,\ldots,a_\mu\}$ of integers, the number of bins~$B$, and the capacity~$C$ of every bin. The goal is to decide whether an assignment~$\alpha\colon A\to [B]$ exists so that for every bin~$j\in[B]$ we have~$\sum_{a\in\alpha^{-1}(j)} a \leq C$ and~$|\alpha^{-1}(j)| = \mu/B$. It is known that \probName{Balanced Bin Packing} is \Wh when parameterized by the number of bins~$B$, even if we assume that~$\sum_{a\in A} a  = B\cdot C$ and all integers are encoded in unary~\citep{KouteckyZ2025}. Without loss of generality, we can assume that~$\mu$ is divisible by~$B$, as otherwise, the instance of \probName{Balanced Bin Packing} is trivially a \No-instance.
    
    Given an instance~$\mathcal{I}=(A,B,c)$ of the \probName{Balanced Bin Packing} problem, we construct an equivalent instance~$\mathcal{J}=(\agents,(\val_i)_{i\in\agents},\numCoals)$ of the ASHGs as follows. First, we create~$2B$ \emph{bag agents}~$b^1_1,b^2_1\ldots,b^1_B,b^2_B$. For every~$j\in[B]$, we set~$\val_{b_j^1}(b_j^2) = \val_{b_j^2}(b_j^1) = -C$, and for every other pair, we set the value to~$-(B\cdot C + 1)$. Observe that the bag agents induce a complete graph. Next, for every item~$a_i\in A$, we create a corresponding \emph{item agent}~$v_i$. Every item agent~$v_i$ has a nonzero value only for bag agents. Specifically, we set~$\val_{v_i}(b_j^\ell) = \val_{b_j^\ell}(v_i) = a_i$ for every~$j\in[B]$ and~$\ell\in[2]$. That is, the item agents induce an independent set of~$G$. To finalize the construction, we set~$\numCoals = B$.

    For correctness, assume first that~$\mathcal{I}$ is a \Yes-instance and~$\alpha$ is a solution assignment. For every~$j\in[B]$, we create a coalition~$\pttn_j = \{b_j^1,b_j^2\} \cup \{ v_i \mid \alpha(a_i) = j \}$. Clearly, all coalitions are non-empty, and every agent belongs to exactly one coalition. It remains to show that such a coalition structure is individually rational. The item agents have only positive valuations, so for them, IR is clearly satisfied. Every bag agent~$b_j^\ell$ is in the same coalition with~$b_j^{\ell'}$, where~$\ell' = \{1,2\}\setminus\ell$. Assume that~$\val_{b_j^\ell}(\pttn) < 0$. Then the value this agent gets from the item agents allocated to the same coalition is smaller than~$C$, which contradicts that~$\alpha$ is a solution for~$\mathcal{I}$, since~$\sum_{v_i\in\pttn_j\setminus\{b_j^1,b_j^2\}} \val_{b_j^\ell}(v_i) = \sum_{v_i\in\pttn_j\setminus\{b_j^1,b_j^2\}} a_i = \sum_{a_i \mid \alpha(a_i) = j} a_i < C$. Consequently, the gain from item agents is at least~$C$, meaning that the value of~$v_j^\ell$ in~$\pttn$ is non-negative. That is,~$\pttn$ is an individually rational coalition structure.
    
    In the opposite direction, assume that~$\mathcal{J}$ is a \Yes-instance and~$\pttn$ is an individually rational coalition structure. First, we show that there is no pair of bag agents~$b_j^\ell$ and~$b_{j'}^{\ell'}$, where~$j\not=j'$ and~$\ell,\ell'\in\{1,2\}$, such that~$\pttn(b_j^\ell) = \pttn(b_{j'}^{\ell'})$. If this were the case, then the utility of agent~$b_j^\ell$ is at most 
   ~$\val_{b_j^\ell}(\pttn) = -(B C+1) + \sum_{v_i\in \pttn_j\setminus\{b_j^\ell,b_{j'}^{\ell'}\}} \val_{b_j^\ell}(v_i) 
    	\leq -B C - 1 + \sum_{v_i\in\agents\setminus\{b_1^1,\ldots,b_N^2\}} \val_{b_j^\ell}(v_i) 
    	= -B C - 1 + \sum_{v_i\in\agents\setminus\{b_1^1,\ldots,b_N^2\}} a_i
    	= -B C - 1 + \sum_{i=1}^N a_i
    	= -B C - 1 + B C = -1$.
    This contradicts that~$\pttn$ is individually rational. Consequently, by the Pigeonhole principle, as there are~$\numCoals$ coalitions and~$2\numCoals$ bag agents, it must be the case that for every~$j\in[B]$ we have~$\pttn(b_j^1) = \pttn(b_j^2)$. Also, without loss of generality, we can assume that~$\pttn(b_j^\ell) = j$, as we can always rename the coalitions. Now, we construct a solution~$\alpha$ for~$\mathcal{I}$. Specifically, for every~$a_i\in A$, we set~$\alpha(a_i) = \pttn(v_i)$. All items are clearly allocated, so it remains to show that, under the assignment~$\alpha$, the capacity of no bag is exceeded. For the sake of contradiction, assume that there is a bin~$j'\in[B]$ such that~$\sum_{a\in \alpha^{-1}(j')} a > C$. Then, since~$\sum_{a\in A} a = B\cdot C$, there is at least one bin, say~$j$, with~$\sum_{a\in\alpha^{-1}(j)} a < C$. However, if such a bin exists, then the utility of~$b_{j}^1$ in~$\pttn$ is~$\val_{b_j^1}(\pttn) = \sum_{v\in \pttn^{-1}(j)\setminus\{b_j^1\}} \val_{b_j^1}(v)
    	= \val_{b_j^1}(b_j^2) + \sum_{v_i\in\pttn^{-1}(j)\setminus\{b_j^1,b_j^2\}} \val_{b_j^1}(v_i)
    	= -C + \sum_{v_i\in\pttn^{-1}(j)\setminus\{b_j^1,b_j^2\}} a_i
    	= -C + \sum_{a_i\in\alpha^{-1}(j)} a_i < 0$, 
    which contradicts that~$\pttn$ is individually rational. Therefore, neither such~$j$ nor~$j'$ can exist, and we obtain that~$\alpha$ is indeed a solution for~$\mathcal{I}$.

    To wrap up, observe that we used~$\numCoals = B$, there are~$2B$ bag agents who form a vertex cover of~$G$, and all negative edges are inside the clique on bag agents---that is, there are~$\binom{B}{2}$ negative edges. %
    For ASHGs-SCC, we create an instance with an identical graph, identical valuations, and identical~$\numCoals$, and set~$\lVec(j)=\uVec(j)=\mu/B$ for every~$j\in[\numCoals]$.

    For the ETH-based lower bound, it was recently shown that \probName{Balanced Bin Packing} does not admit an algorithm running in $\mu^{o(B)}$ time unless ETH fails~\citep{BringmannDW2026}. For the sake of contradiction, assume that there is an algorithm~$\mathbb{A}$ running in $\numAgents^{o(\vc(G))}$ time deciding whether an individually rational solution exists. Then, given an instance $\mathcal{I}$ of \probName{Balanced Bin Packing}, we use the reduction described above to obtain an equivalent instance~$\mathcal{J}$, use $\mathbb{A}$ to decide $\mathcal{J}$, and return the same response for $\mathcal{I}$. The response is correct by the correctness of the reduction. Moreover, as we can create $\mathcal{J}$ in polynomial time, we hav an algorithm for \probName{Balanced Bin Packing} running in $n^\Oh{1} + \numAgents^{o(\vc(G))} = \numAgents^{o(2B)}$, which contradicts ETH.
\end{proof}

Notice that the hardness result is, in fact, even stronger, and shows \Whness with respect to the combined parameter of the number of coalitions~$\numCoals$, the number of negative edges, and the vertex cover number of~$G$.

It turns out that, for fixed-parameter tractability, we have to restrict the valuations even more. In particular, there is an \FPT algorithm for the parameterization by the vertex cover number of~$G$ and maximum value in the valuation functions. The algorithm utilize N-fold ILP as its sub-procedure.

\begin{theorem}\label{thm:ASHG:FPT:VC:Wmax}
    It can be decided in \FPT time with respect to the vertex cover number of~$G$ and~$\wFn_{\max}$ whether an instance of~$\numCoals$-ASHGs admits an individually rational coalition structure.
\end{theorem}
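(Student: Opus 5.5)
Let $C$ be a minimum vertex cover of $G$, which we can compute in \FPT time. Set $\vartheta=|C|$ and $I=V(G)\setminus C$. First I reduce the number of coalitions. If $\numCoals>\vartheta$, the instance is a \Yes-instance by the argument in the proof of \Cref{thm:ASHG:VC:unary:XP}: put each vertex of $C$ in its own singleton and split $I$ arbitrarily into the remaining $\numCoals-\vartheta$ coalitions. So assume $\numCoals\le\vartheta$. Then $\numCoals$ is bounded by the parameter, and we can guess the partition $\pttn^C=(\pttn^C_1,\ldots,\pttn^C_\numCoals)$ of $C$ among at most $\vartheta^{\vartheta}$ options. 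Some parts $\pttn^C_j$ may be empty. Any coalition with empty $C$-part contains only agents of $I$, who then all have utility $0$, so those coalitions only need to be nonempty.

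Next I group the agents of $I$ into types. The type of $v\in I$ is the vector of arc weights $(\val_v(w),\val_w(v))_{w\in C}$. Each entry lies in $[-\wFn_{\max},\wFn_{\max}]$, so there are at most $T=(2\wFn_{\max}+1)^{2\vartheta}$ types. Let $n_\tau$ be the number of agents of type $\tau$. Agents of the same type are interchangeable: an agent of $I$ assigned to coalition $j$ is IR iff $\sum_{w\in\pttn^C_j}\val_v(w)\ge 0$, which depends only on its type and on $j$. For each type $\tau$ and coalition $j$, let $x_{\tau,j}\ge0$ count the type-$\tau$ agents placed in coalition $j$. Set the upper bound $x_{\tau,j}\le 0$ whenever a type-$\tau$ agent would violate IR in $j$.

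The constraints then form an N-fold ILP in which each type is a brick. The local constraint of brick $\tau$ is $\sum_j x_{\tau,j}=n_\tau$; it has one row with coefficients in $\{0,1\}$. The global constraints carry the IR conditions of the vertex-cover agents. For each $w\in\pttn^C_j$ we need
\[
\val_w(\pttn^C_j)+\sum_\tau \val_w(\tau)\,x_{\tau,j}\ge 0,
\]
which becomes an equality after adding one slack variable per $w$, placed in an extra brick. We also need nonemptiness of every coalition with empty $C$-part, $\sum_\tau x_{\tau,j}\ge 1$, again with a slack. This gives $r\le\vartheta+\numCoals\le 2\vartheta$ global rows and $s=1$. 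All coefficients are bounded by $\wFn_{\max}$ up to the constant coefficients on the slack variables. Coalitions with nonempty $C$-part are automatically nonempty, so they need no constraint. We use the zero objective, so $L$ is trivial.

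By \Cref{thm:Nfold:runningTime}, each guess is solved in time $\wFn_{\max}^{\Oh{\vartheta^2}}\cdot d\log d$, where $d\le T\numCoals+\vartheta+\numCoals$. Since $T$ is bounded by a function of the parameters, this step is \FPT, and in fact the number of variables is itself bounded. For correctness in both directions, a feasible integral solution gives an assignment by distributing the $n_\tau$ agents of each type according to $x_{\tau,\cdot}$, and it is IR for all agents by construction. Conversely, any IR structure yields counts $x_{\tau,j}$ that satisfy the ILP for the guess $\pttn^C$ it induces on $C$.

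The main obstacle I expect is verifying that the box and global formulation faithfully captures IR of $I$-agents per coalition. This works only because $I$ is independent, so an $I$-agent's utility depends solely on $\pttn^C_j$. The other point to check is the slack handling for the $\ge$ constraints, so that the parameters $r$, $s$ and $a$ stay bounded. Here the slack brick has to be set up carefully so that its local part is empty or trivial.
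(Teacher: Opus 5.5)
Your proposal is correct and follows the paper's proof. You dispose of $\numCoals>\vartheta$, guess the restriction $\pttn^C$ of a solution to the vertex cover, and decide extendability by an N-fold ILP in which local rows assign the independent agents, box constraints encode their IR, and $\Oh{\vartheta}$ global rows encode IR of the cover agents and nonemptiness, so that $r=\Oh{\vartheta}$, $s=1$, and the coefficients are bounded by $\wFn_{\max}$.

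There are only two differences. The paper uses one brick of binary variables $x_{v,i}$ per agent $v\notin C$, whereas you aggregate agents into types; your aggregation bounds the dimension by a function of the parameters, so a Lenstra-type ILP algorithm would also suffice. Your explicit slack variables also make precise the paper's direct use of (even strict) inequalities in the global constraints.
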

\begin{proof}
    In our algorithm, we can assume that a vertex cover~$C$ of the optimal size~$\vartheta=\vc(G)$ is given as part of the input, as, if not, then we can compute one in~$1.25284^{\vartheta}\cdot n^\Oh{1}$ time~\citep{HarrisN2024}. Recall that, by the argumentation of \Cref{thm:ASHG:VC:unary:XP}, we can assume that~$\numCoals \leq \vartheta$, as otherwise, an IR coalition structure is guaranteed to exist.

    Our algorithm is based on a guess of a partitioning of the vertex cover agents combined with an ILP formulation that verifies whether the guess can be extended with agents outside~$C$ in an individually rational way. Specifically, we model the extension sub-procedure as an N-fold ILP.

    More formally, let~$\pttn^C$ be one of~$\numCoals^\Oh{\vartheta} \in 2^\Oh{\vartheta\log\vartheta}$ possible partitionings of the vertex cover agents into~$\numCoals$ coalitions. For every vertex~$v\in V(G)\setminus C$, we compute the set~$I_v = \{ i \in[\numCoals] \mid \sum_{u\in\pttn^C_i} \wFn(v,u) \geq 0 \}$, i.e.,~$I_v$ is the set of coalitions such that if we add~$v$ to~$\pttn_i^C$, it is individually rational for~$v$. Observe that this can be decided only on the basis of the vertex cover vertices in~$\pttn^C_i$, as~$V(G)\setminus C$ is an independent set. 
    
    In our N-fold ILP, we have a binary variable~$x_{v,i}$ for every~$v\in V(G)\setminus C$ and every coalition~$i\in I_v$. The constraints of the ILP are as follows. First, we have a single local constraint for every~$v\in V(G)\setminus C$, which ensures that each agent outside the vertex cover is assigned to exactly one coalition. Formally, the local constraint is
    \begin{align}
        \forall v\in V(G)\setminus C && \sum_{i\in I_v} x_{v,i} = 1\,. &&\label{eq:ASHG:vc:wmax:FPT:agentCoal}
    \end{align}
    Next, we add a set of global constraints that ensure that (a) the coalition structure is individually rational for every vertex cover agent~$u\in C$ and (b) none of the~$\numCoals$ coalitions is empty. The first condition can be encoded as
    \begin{align}
        \forall u\in C && \val_u(\pttn^C) + \sum_{\mathclap{\substack{v \in V(G)\setminus C\\\pttn^C(u)\in I_v}}} x_{v,\pttn^C(u)}\cdot \wFn(u,v) \geq 0\,, &&\label{eq:ASHG:vc:wmax:FPT:coverVerticesIR}
    \end{align}
    while the second condition can be enforced by introducing the following set of constraints
    \begin{align}
        \forall i\in[\numCoals] && |\pttn^C_i| + \sum_{\mathclap{\substack{v\in V(G)\setminus C\\i\in I_v}}} x_{v,i} > 0\,.\label{eq:ASHG:vc:wmax:FPT:coalSizeAtLeastOne}
    \end{align}
    For the running time, observe that each variable participates in exactly one local constraint, there are~$\vartheta + \numCoals \in \Oh{\vartheta}$ global constraints, and the maximum coefficient of a variable is~$\wFn_{\max}$. Therefore, the N-fold ILP can be solved in time~$(\wFn_{\max})^\Oh{\vartheta^2}\cdot \numAgents^\Oh{1}$, which is clearly in \FPT. As we need to run the ILP for every possible~$\pttn^C$, the overall running time of the algorithm is~$2^\Oh{\vartheta\log\vartheta}\cdot (\wFn_{\max})^\Oh{\vartheta^2}\cdot \numAgents^\Oh{1} \in 2^{\vartheta^2\cdot\log(\wFn_{\max})} \cdot \numAgents^\Oh{1}$. Also, observe that we do not have any requirement on the symmetry of valuations.

    For correctness, assume first that there is an individually rational coalition structure~$\pttn$. Let~$\pttn^C = (\pttn^C_1,\ldots,\pttn^C_\numCoals)$, where~$\pttn^C_i = \pttn_i \cap C$. For every~$v\in V(G)\setminus C$, we set~$x_{v,i} = 1$ if and only if~$i = \pttn(v)$, and we claim that it is a feasible solution for the N-fold ILP constructed for~$\pttn^C$. As~$\pttn$ is individually rational, it holds that for every~$v\in V(G)\setminus C$ we have~$\val_v(\pttn) = \val_v(\pttn(v)) = \sum_{u\in \pttn_{\pttn(v)}} \wFn(v,u) = \sum_{u \in \pttn_{\pttn(v)} \cap C} \wFn(v,u) = \sum_{u\in \pttn^C_{\pttn(v)}} \wFn(v,u) \geq 0$, meaning that~$\pttn(v)\in I_v$. Hence, for each vertex~$v\in V(G)\setminus C$, we have exactly one~$x_{v,i}$ equal to one. Consequently, all constraints of type \eqref{eq:ASHG:vc:wmax:FPT:agentCoal} are satisfied. Moreover, as~$\pttn$ is a solution, it must be the case that each~$\pttn_i$ is of size at least one, and hence, also constraints of type \eqref{eq:ASHG:vc:wmax:FPT:coalSizeAtLeastOne} are satisfied. It remains to show that the constraints \eqref{eq:ASHG:vc:wmax:FPT:coverVerticesIR} are satisfied. By the construction of~$x_{v,i}$, we have that for every~$u\in C$, the constraint \eqref{eq:ASHG:vc:wmax:FPT:coverVerticesIR} can be rewritten as~$\val_u(\pttn^C) + \sum_{v\in \pttn^C_{\pttn(u)}\setminus C} \wFn(u,v) = \sum_{v \in \pttn_{\pttn(u)}} \wFn(u,v)$. However, since~$\pttn$ is IR, this must be at least zero, and hence, \eqref{eq:ASHG:vc:wmax:FPT:coverVerticesIR} is satisfied for every~$u\in C$.

    In the opposite direction, let~$\pttn_C$ be a partition of the vertex cover agents such that there is a feasible solution~$\vec{x}$ of the corresponding ILP. We construct~$\pttn = (\pttn_1,\ldots,\pttn_\numCoals)$ so that we set~$\pttn_i = \pttn^C_i \cup \{ v\in V(G)\setminus C \mid x_{v,i} = 1 \}$. By \eqref{eq:ASHG:vc:wmax:FPT:agentCoal}, for each vertex~$v$, there is exactly one~$x_{v,i}$ equal to one, and by constraint \eqref{eq:ASHG:vc:wmax:FPT:coalSizeAtLeastOne}, each coalition is of size at least one. That is, the partition~$\pttn$ is well-defined. Moreover, since~$x_{v,i} = 1$ if and only if~$i\in I_v$, each agent~$v\in V(G)\setminus C$ has utility at least zero in~$\pttn$. It remains to be shown that the coalition structure is IR also for all vertex cover agents. For the sake of contradiction, let~$u\in C$ be a vertex cover agent such that~$\val_u{\pttn} < 0$. We can decompose the utility as~$\val_u(\pttn) = \val_u(\pttn(u)) = \val_u(\pttn^C(u)) + \sum_{v\in\pttn(u)\setminus C} \wFn(u,v) = \val_u(\pttn^C(u)) + \sum_{v\in\pttn(u)\setminus C} x_{v,\pttn(u)}\cdot \wFn(u,v)$, which is exactly equal to the right side of constraint \eqref{eq:ASHG:vc:wmax:FPT:coalSizeAtLeastOne}. This contradicts that~$\vec{x}$ is a feasible solution. Therefore,~$\pttn$ must be individually rational also for all vertex cover agents, finishing the correctness of the algorithm.
\end{proof}

Notice that the previous algorithm work only for the~$\numCoals$-ASHGs. The obstacle for generalizing it for the ASHGs-SCC is that we cannot easily resolve the case when the number of coalitions is greater than the vertex cover number of~$G$. Therefore, we obtain the following, weaker result for this variant of the problem.

\begin{proposition}\label{thm:SCC:FPT:VC:k:Wmax}
    When parameterized by the vertex cover number~$\vc(G)$, the number of coalitions~$\numCoals$, and the maximum value~$\wFn_{\max}$, the problem of deciding whether a ASHG-SCC~$\Gamma$ admits an individually rational coalition structure is fixed-parameter tractable.
\end{proposition}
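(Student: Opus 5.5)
We reuse the approach of \Cref{thm:ASHG:FPT:VC:Wmax}: guess how the vertex cover agents are distributed among the coalitions, then check with an N-fold ILP whether the guess extends to all agents. Since $\numCoals$ is now a parameter, we never need the case $\numCoals>\vartheta$. This is the case the previous algorithm resolved via the trivial construction, which breaks here because of the size bounds.

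\textbf{Step 1: enumerate the guesses.} Fix a vertex cover $C$ of size $\vartheta=\vc(G)$, computed in FPT time as before. Enumerate all $\numCoals^{\vartheta}$ assignments $\pttn^C=(\pttn^C_1,\ldots,\pttn^C_\numCoals)$ of $C$ to the $\numCoals$ labelled coalitions; parts may be empty. For each agent $v\in V(G)\setminus C$, compute
\[
I_v=\{i\in[\numCoals]\mid \textstyle\sum_{u\in\pttn^C_i}\wFn(v,u)\ge 0\}.
\]
This set already decides IR for $v$, because $V(G)\setminus C$ is independent. In particular, if $\pttn^C_i=\emptyset$ then $i\in I_v$ for every $v$.

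\textbf{Step 2: the N-fold ILP.} Use the binary variables $x_{v,i}$ for $i\in I_v$. Keep the local constraint \eqref{eq:ASHG:vc:wmax:FPT:agentCoal} and the global IR constraints \eqref{eq:ASHG:vc:wmax:FPT:coverVerticesIR} for $u\in C$. Replace the nonemptiness constraint \eqref{eq:ASHG:vc:wmax:FPT:coalSizeAtLeastOne} by the size constraints
\[
\lVec(i)\le |\pttn^C_i|+\sum_{v:\,i\in I_v}x_{v,i}\le \uVec(i)\qquad\text{for every } i\in[\numCoals].
\]
These can be turned into equalities by adding, per coalition, one bounded slack variable with its own brick, or by folding the slack into the global row with a box constraint. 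That gives $\vartheta+2\numCoals$ global constraints, each local block has a single constraint, and every coefficient is bounded by $\wFn_{\max}$ (in absolute value, up to $1$). By \Cref{thm:Nfold:runningTime}, each ILP is solved in $\max(2,\wFn_{\max})^{\Oh{(\vartheta+\numCoals)^2}}\cdot\numAgents^{\Oh{1}}$ time. Summed over the $\numCoals^{\vartheta}$ guesses, the total running time is FPT in $\vartheta+\numCoals+\wFn_{\max}$.

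\textbf{Step 3: correctness.} The argument is verbatim that of \Cref{thm:ASHG:FPT:VC:Wmax}. From an IR structure $\pttn$, restrict it to $C$ to obtain the guess, and set $x_{v,i}=[\pttn(v)=i]$. The IR condition for $v$ gives $\pttn(v)\in I_v$. The IR condition for $u\in C$ gives \eqref{eq:ASHG:vc:wmax:FPT:coverVerticesIR}, and the size bounds of $\pttn$ give the new constraints. Conversely, a feasible $\vec x$ defines a partition whose coalition sizes lie within $[\lVec(i),\uVec(i)]$; it is IR for agents outside $C$ by the definition of $I_v$, and IR for agents in $C$ by \eqref{eq:ASHG:vc:wmax:FPT:coverVerticesIR}.

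\textbf{Main obstacle.} The delicate point is casting the two-sided size bounds into the exact N-fold format with equality global constraints while keeping the number of global rows bounded by a function of the parameters. This is exactly why $\numCoals$ is required as a parameter: each coalition contributes its own global rows.
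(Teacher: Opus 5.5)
Your proposal is correct and follows the paper's proof: the paper also reuses the guess-plus-N-fold-ILP algorithm of \Cref{thm:ASHG:FPT:VC:Wmax} and simply replaces the nonemptiness constraints \eqref{eq:ASHG:vc:wmax:FPT:coalSizeAtLeastOne} by the two-sided bounds $\lVec(j)\le|\pttn^C_j|+\sum_{v} x_{v,j}\le\uVec(j)$. The paper leaves several points implicit that you spell out: with $\numCoals$ as a parameter, both the $\numCoals^{\vartheta}$ guesses and the number of global rows are bounded, and slack variables convert the inequalities into the exact N-fold format (which applies equally to the IR rows \eqref{eq:ASHG:vc:wmax:FPT:coverVerticesIR}).
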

\begin{proof}
    If we are in the setting of ASHGs-SCC, that is, we have prescribed upper and lower bound of each coalition, and~$k\leq \vartheta$, we just replace the constraints \eqref{eq:ASHG:vc:wmax:FPT:coalSizeAtLeastOne} with the following:
    \begin{align}
        \forall j\in[\numCoals] && \lVec(j) \leq |\pttn^C_j| + \sum_{\mathclap{\substack{v\in V(G)\setminus C\\j\in I_v}}} x_{v,j} \leq \uVec(j)\,.\label{eq:ASHG:vc:wmax:FPT:coalSizeExactly}
    \end{align}
    The time complexity and the correctness of the algorithm then remain the same.
\end{proof}

\section{Tree-Like Preference Graph}

In this section, we focus on graphs that are sparse (from a graph-theoretical perspective). This is a reasonable approach in view of Theorem~\ref{thm:ASHG:NPc:clique}: both variants of our problem become intractable already if the underlying graph~$G$ is a complete graph---arguably the ``easiest'' structure between dense graphs.

We explore two different dimensions. First, the following hardness proves that the algorithm of \Cref{thm:ASHG:FPT:VC:Wmax} cannot be strengthened to more general graph families, as both variants of ASHGs are \Wh when parameterized by the treedepth of~$G$, even if the valuations are binary. Moreover, the reduction asks for partitioning into two coalitions. This strengthen hardness from \Cref{thm:ASHG:VC:NPhard} and shows that even if the number of coalitions is constant, polynomial-time algorithm cannot exist regardless of how restricted the preferences are. Additionally, the result is tight, as if~$\numCoals = 1$, the only possible solution is the grand coalition, and we can check whether it is IR in polynomial time. The reduction is from the \probName{General Factors} problem~\citep{GutinKSSY2012}, where we are given a bipartite graph~$H=(X,Y)$, along with a list function~$L\colon V(H) \to 2^{[\Delta(H)]}$, and the goal is to decide if there exists a subset~$S\subseteq E(H)$ such that~$d_{H-S}(u) \in L(u)$ for all~$u \in V(H)$.

\begin{theorem}\label{thm:ASHG:TD:binary:Whard}
    When parameterized by the treedepth of~$G$, it is \Wh to decide if a~$\numCoals$-ASHG or ASHG-SCC~$\Gamma$ admits an individually rational coalition structure, even if the valuations are symmetric, binary, and~$\numCoals = 2$.
\end{theorem}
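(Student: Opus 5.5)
The plan is a parameterized reduction from \probName{General Factors}. This problem is \Wh when parameterized by the treedepth of the bipartite graph~$H$, which I would take from the literature (it follows from the treewidth-hardness reduction of \citet{GutinKSSY2012}, whose produced instances have bounded treedepth). The reduction must produce a symmetric binary instance with~$\numCoals=2$ whose preference graph~$G$ has treedepth bounded by a function of~$\operatorname{td}(H)$. The intended semantics is as follows. There are two coalitions, a ``left'' coalition~$\pttn_1$ and a ``right'' coalition~$\pttn_2$. For every edge~$e=xy$ of~$H$ we create an \emph{edge agent}~$a_e$. Then~$a_e\in\pttn_1$ means that~$e$ is kept, and~$a_e\in\pttn_2$ means that~$e\in S$ is deleted.

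To fix the meaning of the two coalitions without large weights, I would add two \emph{anchor} agents~$r_1,r_2$ with~$\val_{r_1}(r_2)=-1$. Each anchor gets enough positive ($+1$) neighbours that are forced to stay with it, so that the anchors necessarily end up in different coalitions. Every other gadget agent is attached to one or both anchors, so the anchors act as two apex vertices and add at most~$2$ to the treedepth.

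For every vertex~$u\in V(H)$ with incident edges~$E(u)$, I would build a \emph{degree gadget} realizing the constraint~$|E(u)\setminus S|\in L(u)$ using only~$\pm1$ weights. The basic building block is a \emph{threshold agent}~$t$, placed (by its adjacency to an anchor) in~$\pttn_1$. It has value~$+1$ for each~$a_e$ with~$e\in E(u)$ and value~$-1$ for~$d$ private ``penalty'' agents that are forced into~$\pttn_1$. Then~$t$ is IR if and only if at least~$d$ of its edge agents lie in~$\pttn_1$. A mirrored threshold agent placed in~$\pttn_2$ expresses ``at most~$d$ kept'' via ``at least~$\deg(u)-d$ deleted''. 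Together they pin the kept degree to exactly~$d$.

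The choice of~$d\in L(u)$ is the delicate part. For each~$d\in L(u)$ I would create a pair of such threshold agents together with a \emph{selector} agent~$s_{u,d}$. The threshold agents of value~$d$ are forced to the ``checking'' side only when~$s_{u,d}$ is on a particular side; otherwise they can escape to a harmless position. A small binary gadget then forces exactly one selector per~$u$ to be active, for example through a path or clique structure of~$\pm1$ edges between the selectors and the anchors. Every gadget for~$u$ touches only~$E(u)$, the anchors, and private agents. Hence~$G$ minus the anchors is obtained from~$H$ by subdividing edges and attaching, to each vertex, a private bounded-depth tree-like gadget (of size polynomial in~$\Delta(H)$). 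From this I would argue~$\operatorname{td}(G)\le 2\operatorname{td}(H)+\Oh{1}$, via an elimination forest of~$H$ extended by placing each edge agent and each gadget below its endpoint's vertex.

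Correctness then runs in both directions. Given~$S$, place the edge agents by~$S$, activate the selector for the realized degree, and check IR for each type of agent. Conversely, from an IR partition, first show that the anchors are separated and every forced agent sits with its anchor. Then read off~$S$ and use the active selector's thresholds to conclude~$d_{H-S}(u)\in L(u)$. For ASHG-SCC, I would set~$\lVec(j)=1$ and~$\uVec(j)=\numAgents$, or compute the exact sizes if needed.

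The main obstacle is the selector/threshold gadget. With binary weights, an agent's only constraint is ``(number of positive neighbours in my coalition)~$\ge$ (number of negative ones)''. Inactive thresholds must therefore be satisfiable wherever they go, active ones must be binding, and all of this must be achieved without creating long paths or dense structure that would blow up treedepth. I would try first to make each inactive threshold agent able to move to the opposite coalition, where its~$\pm1$ neighbours cancel exactly.
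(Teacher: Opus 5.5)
\textbf{There is a genuine gap, and it is in the treedepth bound.} In your degree gadget, every threshold agent of $u$ (you create a pair for each $d\in L(u)$) has a $+1$ edge to \emph{every} edge agent $a_e$ with $e\in E(u)$. So these $2|L(u)|$ agents together with $\{a_e : e\in E(u)\}$ span a complete bipartite subgraph $K_{2|L(u)|,\deg(u)}$ of $G$. Treedepth is subgraph-monotone and $\operatorname{td}(K_{a,b})=\min(a,b)+1$, so $\operatorname{td}(G)\ge\min(2|L(u)|,\deg(u))+1$.

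Bounded $\operatorname{td}(H)$ bounds neither quantity. In the hard instances of Gutin et al., the parameter is the size of the smaller side, so those vertices necessarily have huge degree, and nothing bounds their list sizes. Your claim that $G$ minus the anchors is ``$H$ subdivided plus a private gadget attached to each vertex'' is therefore false. No single agent plays the role of $u$; your thresholds are $|L(u)|$ twins of $u$. The same objection applies to enforcing ``exactly one selector'' with a clique on the selectors (treedepth $|L(u)|$) or with a path (treedepth about $\log|L(u)|$, still unbounded).

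\textbf{The selector mechanism is also left unconstructed.} With binary weights, IR only repels an agent from negative neighbours. Any threshold can sit in $\pttn_2$, where it sees only its $+1$ edges to deleted edge agents. Making an active threshold binding would need on the order of $\deg(u)$ selector-controlled negative neighbours, which you have not designed.

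\textbf{The missing idea, which the paper uses, is to let list choice and edges interact only through $\Oh{1}$ hub agents per vertex of $H$.}
Each $x\in V(H)$ gets a $12$-agent gadget. All-negative connector agents plus an enforcer $s$ force $x_1,\dots,x_4$ into $\pttn_1$, with $x_1$ at base utility $+1$ and $x_4$ at base utility $-1$.
Each option $\ell\in L(x)$ is a separate all-negative subdivided star. Its centre has a $-1$ edge to $x_1$ and a $+1$ edge to $x_4$, so exactly one centre joins $\pttn_1$. That star's leaves follow the centre, and $\ell$ of them give $+1$ to $x_2$ and $-1$ to $x_3$.
Each edge $xy$ of $H$ is a negative $4$-cycle whose endpoints give $-1$ to $x_2$, $y_2$ and $+1$ to $x_3$, $y_3$. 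IR of $x_2$ and of $x_3$ together then force the number of kept edges at $x$ to equal the selected $\ell$.
Because list gadgets hang only off the hubs, list size never multiplies with degree. Deleting the gadgets of the smaller side together with $s$ leaves components of constant treedepth, so $\operatorname{td}(G)\le 12\min(|X|,|Y|)+\Oh{1}$.

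Your source problem is fine, since $\operatorname{td}(H)\le\min(|X|,|Y|)+1$. However, the hardness you need is Gutin et al.'s result for the smaller-side parameter, not a treewidth reduction.
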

\begin{proof}
    We show this result by providing a parameterized reduction from \probName{General Factors} problem on bipartite graphs, where it is known to be \Wh parameterized by the size of the smallest partition~\citep{GutinKSSY2012}. In an instance of \probName{General Factors} (on bipartite graphs), input is a bipartite graph~$H=(X\cup Y,E)$, along with a list function~$L\colon V(H) \to \mathcal{P}(\{0,\dots,\Delta(H)\})$, and the goal is to decide if there exists a subset~$S\subseteq E(H)$ such that~$d_{H-S}(u) \in L(u)$ for all~$u \in V(H)$.  

    From an instance~$(H=(X\cup Y,E), L)$, we construct an equivalent instance~$\mathcal{J}=(\agents,(\val_i)_{i\in\agents},2)$ of the ASHGs such that for distinct~$i,j\in \agents$,~$\val_i(j) \in \{+1,-1,0\}$ as follows. Moreover, our valuations will be symmetric and hence, the valuations for our instance~$\mathcal{J}$ are represented using an undirected graph~$G = (\agents,E,\wFn)$, where~$ij\in E$ iff~$\wFn(ij) \in \{+1,-1\}$. Our reduction requires several gadgets, which we will provide one by one to ease the exposition. Moreover, we are targeting a coalition structure with two coalitions, and we refer to these coalitions as~$\pi_1$ and~$\pi_2$.
    \begin{figure}
        \centering
        \begin{tikzpicture}[
            vx/.style={draw, circle, inner sep=0pt, minimum width=16pt, thick, fill=white},
            redge/.style={very thick,cbRed},
            gedge/.style={very thick, densely dashed,cbGreen},
          ]
         
          \node[vx] (x8) at (-1.5, 1.5)    {$x_8$};
          \node[vx] (x9) at (-3, 0)   {$x_9$};
          \node[vx] (x1) at (0, 0) {$x_1$};
          \draw[redge] (x8) -- (x9);
          \draw[redge] (x8) -- (x1);
          \draw[gedge] (x9) -- (x1);
         
          \node[vx] (x5) at (3, 0)   {$x_5$};
          \node[vx] (x2) at (1.5, -1)   {$x_2$};
          \node[vx] (x6) at (3, -2)   {$x_6$};
          \node[vx] (x3) at (1.5, -3)  {$x_3$};
          \node[vx] (x7) at (3, -4)  {$x_7$};
          \node[vx] (x4) at (1.5, -5)  {$x_4$};
          \draw[redge] (x1) -- (x5);
          \draw[redge] (x5) -- (x2);
          \draw[redge] (x2) -- (x6);
          \draw[redge] (x6) -- (x3);
          \draw[redge] (x3) -- (x7);
          \draw[redge] (x7) -- (x4);
         
          \node[vx] (x10) at (-0.75, -5)  {$x_{10}$};
          \node[vx] (x12) at (-3, -5)   {$x_{12}$};
          \node[vx] (x11) at (0.25, -6.5) {$x_{11}$};
          \draw[redge] (x4) -- (x10);
          \draw[redge] (x4) -- (x11);
          \draw[redge] (x10) -- (x11);
          \draw[gedge] (x12) -- (x10);
         
        \end{tikzpicture}
        \caption{Illustration for vertex gadget~$V^x$. Here, each green (dashed) edge corresponds to a weight of~$+1$ and each red (solid) edge corresponds to a weight of~$-1$.}
        \label{fig:vGadget}
    \end{figure}
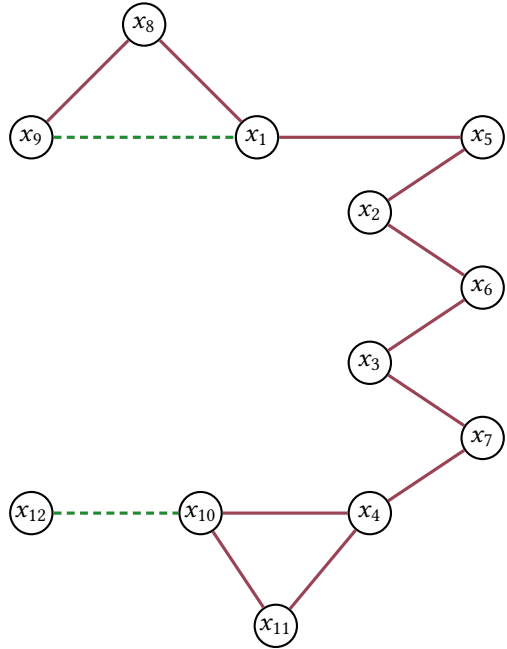
    
    \smallskip
    \noindent\textbf{Vertex Gadget.} First, we have a \textit{vertex gadget}, that will be used to replace vertices of~$H$. In particular, for each~$x\in V(H)$, we have a vertex gadget~$V^x$ that consists of~$12$ vertices~$\{x_1,x_2,\ldots,x_{12}\}$. See \Cref{fig:vGadget} for an illustration. Next, we define edges/valuations of our vertex gadget~$V^x$: for~$j\in [5,7]$,~$\wFn(x_jx_{j-4}) = \wFn(x_jx_{j-3}) = -1$;~$\wFn(x_8x_1)= \wFn{x_8}(x_8x_9)= \wFn(x_{11}x_4)= \wFn(x_{11}x_{10})= \wFn(x_{10}x_4)=-1$; and~$\wFn(x_9x_1)= \wFn(x_{12}x_{10}) = +1$. We say that~$x_{12}$ is the \textit{leaf vertex} of the vertex gadget~$V^x$ While, extending our graph, only vertices~$x_1,x_2,x_3,$ and~$x_4$ will have edges outside the gadget~$V^x$, and hence, we can deduce some properties of any IR stable coalition right now. We have the following claim:
    \begin{claim}\label{C:basic}
        In any IR stable solution, vertices~$x_1,x_2,$~$x_3,x_4,x_9,x_{10},x_{12}$ must be in the same coalition. 
    \end{claim}
    \begin{claimproof}
        Recall that any solution will have~$2$ coalitions, say~$\pi_1,\pi_2$ and only vertices~$x_1,\ldots,x_4$ can have edges with the rest of the graph. The proof of our claim follows from the observation that if an agent only have~$-1$ valuation for all of its neighbors, then all of its neighbors will be in the coalition distinct from its own.   Without loss of generality, let us assume that~$x_1\in \pi_1$. Thus,~$x_5 \in \pi_2$, implying~$x_2\in \pi_1$, implying~$x_6\in \pi_2$, implying~$x_3 \in \pi_1$, implying~$x_7 \in \pi_2$, implying~$x_4\in \pi_1$. Thus,~$x_1,x_2,x_3,x_4\in \pi_1$. Similarly,~$x_1\in \pi_1$ implies~$x_8\in \pi_2$, implying~$x_9 \in \pi_1$, and~$x_4\in \pi_1$ implies~$x_{11}\in \pi_2$, implying~$x_{10} \in \pi_1$. Finally, since~$\wFn (x_{10}x_4) = -1$, there must be at least one neighbor of~$x_{10}$ with positive valuation in~$\pi_1$, and since~$x_{12}$ is the unique such vertex, we have that~$x_{12} \in \pi_1$. This completes the proof of our claim. We remark here that~$\pi_1$ is not individually stable for~$x_4$ yet, but  we will add more edges incident on~$x_4$ later that will make the coalition~$\pi_1$ stable or unstable for~$x_4$ depending on whether~$(H,L)$ is a Yes or No instance of \probName{General Factors}.  
    \end{claimproof}

    Next, we add an ``enforcer'' vertex~$s$ to~$G$ and for each~$u\in V(H)$, we connect~$s$ to~$u_4$ such that~$\wFn(su_4) = -1$.  This immediately leads us to the following observation: In any IR coalition structure,~$s$ is in one of the coalitions and, for each~$u\in V(H)$,~$u_4$ is in the other coalition. This observation, along with \Cref{C:basic} leads, us to make the following remark.

    \begin{remark}\label{R:1}
        Without loss of generality, let us assume for the rest of proof that~$s\in \pi_2$. Thus, for each vertex~$x \in V(H)$,~$x_4\in \pi_1$, and hence,~$x_1,x_2,$~$x_3,x_4,x_9,x_{10},x_{12} \in \pi_1$ and~$x_5,x_6,x_7,x_8,x_{11} \in \pi_2$. Thus, when  restricted to vertex gadgets and the enforcer vertex, in any IR coalition structure,~$|\pi_1| = 7 |V(H)|$ and~$|\pi_2| = 5 |V(H)|+1$. 
    \end{remark}

    \noindent\textbf{List Element Gadget.} For a vertex~$u\in V(H)$, let~$\ell_u = \max L(u)$. For each~$u\in V(H)$ and for each element~$\ell \in L(u)$, we create a \textit{list element gadget}~$L^{\ell,u}$ in the following manner. Consider a star on~$\ell_u$ leaves and subdivide each edge of this star exactly once. Next, corresponding to each edge~$e$ of this subdivided star, we have~$\wFn(e) = -1$. We will refer to the center of the star as the \textit{center vertex of}~$L^{\ell,u}$. Observe that in any stable coalition, the leaves and the center of~$L^{\ell,u}$ (in total,~$\ell_u +1$ agents) will end up in the same coalition, and all neighbors of leaves ($\ell_u$ vertices introduced by the subdivision) will end up in the other coalition.

    \smallskip
    \noindent\textbf{Edge Gadget.} Finally, for every edge~$e\in E(H)$, we will have an \textit{edge gadget}~$E^e$, which is a cycle on four vertices, say~$w_1,w_2,w_3,w_4$, such that~$\wFn(w_1w_2) = \wFn(w_2w_3) = \wFn(w_3w_4)= \wFn(w_4w_1) = -1$. We refer to~$w_2,w_4$ as the \textit{center vertices of}~$E^e$ and~$w_1,w_3$ as the endpoints of~$E^e$. Observe that in any (IR) stable coalition structure, for any edge~$e$, the endpoints of the edge gadget~$E^e$ (i.e.,~$w_1,w_3$) will always end up in the same coalition while the center vertices of the edge gadget (i.e.,~$w_2,w_4$) will end up in the other coalition. Thus, each edge gadget will contribute exactly two vertices to~$\pi_1$ and exactly two vertices to~$\pi_2$. Thus, we can extend our \Cref{R:1} to the following. 
    \begin{remark}\label{R:2}
        When restricted to the vertex gadgets, edge gadgets, and the enforcer vertex~$s$,~$|\pi_1| = 7 |V(H)| + 2|E(H)|$ and~$|\pi_2| = 5|V(H)| + 2|E(H)|+1$. 
    \end{remark}

    Next, we  provide full construction of our ASHG-SCC instance using~$(H,L)$ and the gadgets defined above. We begin with replacing each vertex~$x\in V(H)$ with a copy of the vertex gadget~$V^x$ having 12 vertices~$x_1,\ldots,x_{12}$. Next, for each vertex~$x\in V(H)$ and list element~$\ell \in L(x)$, we construct a list element gadget~$L^{\ell,x}$. At this point, for each~$x\in V(H)$, we connect the central vertex, say~$z$, of each~$L^{\ell,x}$ to~$x_1$ and~$x_4$ such that~$\wFn(zx_1) = -1$ and~$\wFn(zx_4) = +1$. Moreover, for arbitrary~$\ell$ leaf vertices (out of~$\ell_x$ leaf vertices of~$L^{\ell,x}$)~$y_1,\ldots,y_{\ell}$,  we connect~$y_i$ (for~$i\in [\ell]$) to~$x_2$ and~$x_3$ such that~$\wFn(y_ix_2) = +1$ and~$\wFn(y_ix_3) = -1$. The vertices~$x_1, x_4$ do not have any additional edges. (We will connect more edges to vertices~$x_2$ and~$x_3$). See \Cref{fig:complete} for an illustration. Before proceeding with the edge gadgets, we prove the following easy claim.
    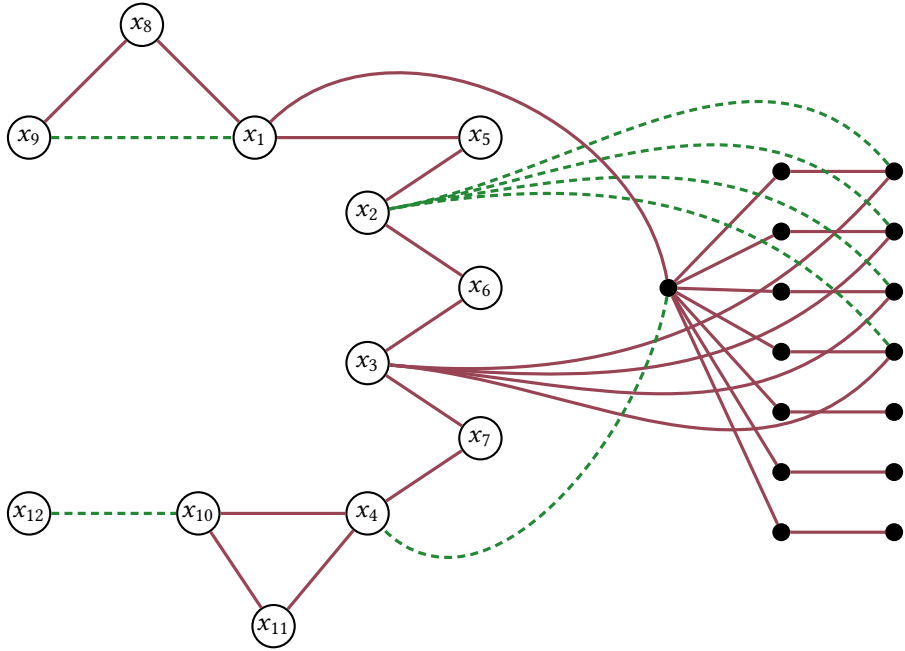
\begin{figure}
        \centering
        \begin{tikzpicture}[
            vx/.style={draw, circle, inner sep=0pt, minimum width=16pt, thick, fill=white},
            vl/.style={draw, circle, inner sep=0pt, minimum width=6pt, thick, fill=black},
            redge/.style={very thick,cbRed},
            gedge/.style={very thick, densely dashed,cbGreen},
            ]
            
            \node[vx] (x8) at (-1.5, 1.5)    {$x_8$};
            \node[vx] (x9) at (-3, 0)   {$x_9$};
            \node[vx] (x1) at (0, 0) {$x_1$};
            \draw[redge] (x8) -- (x9);
            \draw[redge] (x8) -- (x1);
            \draw[gedge] (x9) -- (x1);
            
            \node[vx] (x5) at (3, 0)   {$x_5$};
            \node[vx] (x2) at (1.5, -1)   {$x_2$};
            \node[vx] (x6) at (3, -2)   {$x_6$};
            \node[vx] (x3) at (1.5, -3)  {$x_3$};
            \node[vx] (x7) at (3, -4)  {$x_7$};
            \node[vx] (x4) at (1.5, -5)  {$x_4$};
            \draw[redge] (x1) -- (x5);
            \draw[redge] (x5) -- (x2);
            \draw[redge] (x2) -- (x6);
            \draw[redge] (x6) -- (x3);
            \draw[redge] (x3) -- (x7);
            \draw[redge] (x7) -- (x4);
            
            \node[vx] (x10) at (-0.75, -5)  {$x_{10}$};
            \node[vx] (x12) at (-3, -5)   {$x_{12}$};
            \node[vx] (x11) at (0.25, -6.5) {$x_{11}$};
            \draw[redge] (x4) -- (x10);
            \draw[redge] (x4) -- (x11);
            \draw[redge] (x10) -- (x11);
            \draw[gedge] (x12) -- (x10);
            
            \node[vl] (l0) at (5.5,-2) {};
            
            \foreach[count=\i] \y in {0.45,1.25,2.05,...,5.5} {
                \node[vl] (l\i) at (7,-\y) {};
                \node[vl] (l\i2) at (8.5,-\y) {};
                \draw[redge] (l0) -- (l\i) -- (l\i2);
            }
            
            \foreach \i in {1,2,3,4} {
                \draw[gedge,out=130,in=10] (l\i2) to (x2);
                \draw[redge,out=-131,in=-5] (l\i2) to (x3);
            }
            
            \draw[redge,out=100,in=45] (l0) to (x1);
            \draw[gedge,out=-100,in=-45] (l0) to (x4);
        \end{tikzpicture}
        \caption{Each red edge corresponds to a weight of~$-1$ and each green edge corresponds to a weight of~$+1$. Here, the list gadget has~$7$ leaves (i.e.,~$\ell_x =7$) but only (top)~$4$ of the leaves have edges to~$x_2,x_3$ (i.e.,~$\ell_i = 4$).}
        \label{fig:complete}
    \end{figure}
    
    \begin{claim}\label{C:vertexList}
        Consider a vertex~$x\in V(H)$, the list~$L(x)= \{\ell_1,\ldots,\ell_p\}$, and their corresponding gadgets~$V^x$ and~$L^{\ell_1,x},\ldots,L^{\ell_p,x}$, respectively. In any (IR) stable coalition structure, there is a unique~$i\in [p]$ such that the center vertex of~$L^{\ell_i,x}$, and hence all the leaves of~$L^{\ell_i,x}$, will be in the same coalition as the coalition containing~$x_1,x_2,x_3,x_4$. Moreover, if~$L^{\ell_i,x}$ is the unique list gadget whose center vertex and leaves are in~$\pi_1$, then the weight on~$x_2$ and~$x_3$ due to vertices of list gadgets is~$+\ell_i$ and~$-\ell_i$, respectively.
    \end{claim}
    \begin{claimproof}
        Due to \Cref{R:1}, we can assume without loss of generality that~$x_1,x_2,x_3,x_4,x_9,x_{10},x_{12} \in \pi_1$. Since in this (partial) coalition, the weight on~$x_1$ is~$+1$, there can be at most one vertex~$y$ in~$\pi_1$ such that~$\wFn(yx_1) = -1$. Since for each center vertex~$z$ of the list gadgets connected to~$x_1$,~$\wFn(zx_1) = -1$, at most one center vertex of the list gadgets connected to~$x_1$ can be in coalition~$\pi_1$. Similarly, observe that in the coalition restricted to ~$x_1,x_2,x_3,x_4,x_9,x_{10},x_{12}$, the weight on~$x_4$ is exactly~$-1$, there has to be at least one vertex~$y\in \pi_1$ such that~$\wFn(yx_4) = +1$. Since for each center vertex~$z$ of the list gadgets connected to~$x_4$,~$\wFn(zx_1) = +1$ (and~$x_4$ is not connected to any other vertices in~$G$), at least one center vertex of the list gadgets connected to~$x_1$ must be in coalition~$\pi_1$. Thus, there is exactly one~$i\in [p]$ such that the central vertex of~$L^{\ell_i,x}$ is in~$\pi_1$. Finally, since exactly~$\ell_i$ of~$\ell_u$ leaves are connected to~$x_2$ and~$x_3$ with edges with weights~$+1$ and~$-1$, respectively, we get the desired weights on~$x_2$ and~$x_3$.
    \end{claimproof}

    Due to \Cref{C:vertexList}, we can extend our \Cref{R:2} to the following remark, which establishes that in any IR coalition structure, the coalitions~$\pi_1$ and~$\pi_2$ will have a fixed size.
    \begin{remark}\label{R:3}
        In any IR coalition structure,~$|\pi_1| = 7|V(H)| +2 |E(H)| + \sum_{u\in V(H)} (|L(u)|\cdot \ell_u+1)$ and~$|\pi_2| = 5|V(H)| +2|E(H)|+1+\sum_{u\in V(H)} (|L(u)|\cdot \ell_u+|L(u)|-1)$.
    \end{remark}
    Due to \Cref{R:3}, we have that the sizes of the coalitions in any IR coalition structure are fixed, and hence to complete our proof, we only need to establish that the instance that we created is equivalent to the input instance of \textsc{General Factors}.
    
    Next, for each edge~$e = xy\in E(H)$, recall that we have an edge gadget~$E^e$ with vertices~$w_1,w_2,w_3,w_4$. We connect~$w_1$ to~$x_2,x_3$ such that~$\wFn(w_1x_2) = -1$ and~$\wFn(w_1x_3) = +1$. Similarly, we connect~$w_3$ to~$y_2$ and~$y_3$ such that~$\wFn(w_3y_2) = -1$ and~$\wFn(w_3y_3) = +1$. 
    
    Recall our coalition structure with coalitions~$\pi_1,\pi_2$. Observe that ~$\pi_2$ induces an independent set in~$G$ and hence is IR. We will in fact maintain that the final coalition~$\pi_2$ still induces an independent set in~$G$, and hence always respects IR.
    
    Next, consider the weights on vertices of~$\pi_1$ in this partial assignment. For each vertex that is a center vertex (resp. a leaf vertex) of a list gadget, it has a positive edge to~$x_4$ (resp.~$x_2$) and a negative edge to~$x_1$ (resp.~$x_3$) and no other edges, and hence respects IR. Similarly, all vertices coming from a vertex gadget~$V^x$ except~$x_2$ and~$x_3$ have non-negative weights as argued before and the weight on~$x_2$ is~$+\ell_i$ and weight on~$x_3$ is~$-\ell_i$ where ~$L^{\ell_i,x}$ is the list gadget whose center and leaves are in~$\pi_1$. At this point, the only vertices of~$G$ that are not assigned to a coalition are the vertices of the edge gadgets. Recall that for each edge gadget~$E^e$, either the center vertices of~$E^e$ will be assigned to~$\pi_1$ and the endpoints will be assigned to~$\pi_2$, or vice versa (i.e., the endpoints are assigned to~$\pi_1$ and the center assigned to~$\pi_2$).  Since all of these vertices (from edge gadgets) are independent to vertices of~$\pi_2$, for any edge gadget, we can either add its central vertex or its endpoints to~$\pi_2$ while maintaining that~$\pi_2$ respects individual rationality.  
    
    Now, to complete our proof, we need to establish that~$G$ admits an IR coalition structure if and only if~$(H,L)$ is a Yes-instance of \textsc{General Factors}. Since till this point, the partial allocations~$\pi_1$ and~$\pi_2$ are created without loss of generality and~$\pi_2$ will admit IR,~$G$ admits an IR coalition structure if and only if~$\pi_1$ can be extended to an IR coalition where for each edge gadget, we either add its center vertices to~$\pi_1$ or its endpoints to~$\pi_1$. 

     In one direction, let~$(H,L)$ is a Yes-instance of \textsc{General Factors}. Let~$S\subseteq E(H)$ be  such that~$d_{H-S}(u) \in L(u)$ for all~$u\in V(H)$. Now, for each~$x\in V(H)$, we choose~$L^{\ell_i,x}$ where~$\ell_i = d_{H-S}(u) \in L(u)$ as the unique list gadget such that the leaves and the center of~$L^{\ell_i,x}$ are in~$\pi_1$. At this point, the weight on~$x_2$ is~$+\ell_i$ and the weight on~$x_3$ is~$-\ell_i$ and the weight on all other vertices is positive (as discussed above). Finally, for each edge~$e\notin S$, assign the endpoints of the edge gadget in~$E^e$ to~$\pi_1$ and for each edge~$e'\in S$, assign the center vertices of~$E^{e'}$ to~$\pi_1$. Recall that the center vertices of edge gadgets are independent to~$\pi_1$ and exactly one of the endpoints of an edge gadget has weight exactly~$-1$ to~$x_2$ and weight exactly~$+1$ to~$x_3$ if and only if the corresponding edge is incident to~$x$. Thus, for each vertex~$x\in V(H)$, the vertices~$x_2$ and~$x_3$ receive a valuation of weight exactly~$-d_{H-S}(x)$ and~$+ d_{H-S}(u)$,  respectively. Thus, for~$x_2$ as well as~$x_3$, the total weight on them in~$\pi_1$ is~$0$. Finally, since the endpoints of each edge gadget if are connected to some~$x_2$ via an edge with weight~$-1$, then they are also connected to~$x_3$ with weight~$+1$, and since~$x_2$ and~$x_3$ are in the same coalition, each endpoint of an edge gadget will have a~$0$ valuation in its respective coalition. Thus~$G$ admits an IR coalition structure.

     In the other direction, let~$G$ admits an IR coalition structure with coalition~$\pi_1$ and~$\pi_2$. Recall that for each vertex~$x\in V(H)$,~$x_2,x_3\in \pi_1$. Further there is a unique list gadget~$L^{\ell_i,x}$ such that the center vertex and the leaves of~$L^{\ell_i,x}$ are in~$\pi_1$. Thus, the total weight on~$x_2$ and~$x_3$ till this point is~$+\ell_i$ and~$-\ell_i$, respectively. At this point, we claim that there are exactly~$\ell_i$ edge gadgets corresponding to edges incident on~$x$ whose endpoints are in~$\pi_1$. To see this, observe that if more than~$\ell_i$ edge gadgets corresponding to edges incident on~$x$ have their endpoints in~$\pi_1$, then~$x_2$ will have a negative weight, and if less than~$\ell_i$ edge gadgets corresponding to edges incident on~$x$ have their endpoints in~$\pi_1$, then~$x_3$ will have a negative weight. Thus, for each vertex~$u\in V(H)$,  if~$L^{\ell_i,u}$ is the unique list gadget whose center and leaves are in~$\pi_1$, then there are exactly~$\ell_i$ edges incident on~$u$ in~$G$ such that the endpoints of the edge gadgets corresponding to these~$\ell_i$ edges are in~$\pi_1$. Thus, let~$S$ be the set of edges whose center vertices are in~$\pi_1$. Observe that for each vertex~$u\in V(H)$,~$d_{H-S}(u) = \ell_i\in L(u)$ where~$L^{\ell_i,u}$ is the unique list gadget corresponding to~$u$ whose center and leaves are in~$\pi_1$.  Hence,~$(H,L)$ is a Yes-instance of \textsc{General Factors}. 
     
     This establishes that the two instances are equivalent. To complete our proof, finally we establish that treedepth of~$G$ is bounded by a function of the smaller partition of~$H$ (i.e., minimum of~$|X|$ and~$|Y|$). Without loss of generality, let us assume that~$|X|\leq |Y|$. Then, for each~$x\in X$, delete each vertex in~$V^x$ and delete~$s$ (i.e.,~$12|X|+1$ vertices in total). Next, we establish that after deleting these vertices, the treedepth of each component of the remaining graph is bounded. Each list gadget corresponding to vertices of~$x$ is a component of the remaining graph and has treedepth at most~$3$ since deleting the center vertex of a list gadget give components with at most~$2$ vertices each. The remaining components contain the edge gadgets, vertex gadgets and list gadgets corresponding to vertices of~$Y$. Now consider any connencted component in this graph. A component can contain the ($12$) vertices of a vertex gadget corresponding to a vertex~$u$ of~$Y$, the list gadgets corresponding to the vertex~$u$, and the edge gadgets corresponding to edges incident on~$u$. Moreover, no two vertex gadgets are in the same component. Hence, in this component, we can delete at most~$12$ vertices and each component of the remaining graph will either be an edge gadget (a cycle on~$4$ vertices having treedepth at most~$3$) or a list gadget (with treedepth at most~$3$). Thus, the treedepth of~$G$ is at most~$12|X|+1+12+3 = 12|X|+16$. Hence, the treedepth of~$G$ is bounded by a (linear) function of the minimum partition of~$H$. This completes our proof.
\end{proof}

To finalize the complexity picture, in our next result we show that when the valuations are binary, there is an \XP algorithm for the parameterization by the celebrated treewidth. As is common for such result, we employ dynamic programming over a nice tree-decomposition of our graph. However, before we can do so, we settle a relation between a chromatic number of~$G$ and our problem, which allow us to significantly restrict the instance.

\begin{theorem}\label{thm:ASHG:TW:binary:XP}
    When the valuations are binary and the problem is parameterized by the treewidth~$\tw$ of~$G$, it is in \XP to decide whether~$\numCoals$-ASHG~$\Gamma$ admits an individually rational coalition structure. 
\end{theorem}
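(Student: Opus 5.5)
We reduce the number of coalitions to a function of the treewidth and then run a dynamic program over a nice tree-decomposition.

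\textbf{Step 1: large $\numCoals$ is trivially feasible.} The preamble to the statement suggests using the chromatic number of $G$. Since $G$ has treewidth $\tw$, it is $\tw$-degenerate and hence properly colorable with $\chi \le \tw+1$ colors. Suppose $\numCoals \ge \tw+1$ and $\numAgents \ge \numCoals$. Take a proper coloring with at most $\tw+1$ classes. Each color class is independent, so every agent in such a coalition has utility $0$ and the coalition is IR. If there are fewer classes than $\numCoals$, we split color classes further. Any nonempty subset of an independent set is again independent, so we can refine until there are exactly $\numCoals$ nonempty independent coalitions. This is possible because $\numAgents \ge \numCoals$. Hence for $\numCoals > \tw$ (in the $\numCoals$-ASHG setting) we answer \Yes. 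From now on we assume $\numCoals \le \tw+1$.

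\textbf{Step 2: the DP table.} We compute a nice tree-decomposition of width $\tw$. For a node $x$ with bag $B_x$, let $V_x$ be the set of vertices in the subtree rooted at $x$. A state consists of:
\begin{itemize}
\item an assignment $c\colon B_x\to[\numCoals]$;
\item for each $v\in B_x$, the partial utility $u_v$ that $v$ collects from already-introduced vertices of its own coalition, with $u_v\in[-\numAgents,\numAgents]$ since valuations are binary;
\item the set $Q\subseteq[\numCoals]$ of coalitions that are already nonempty within $V_x$.
\end{itemize}
The entry is \texttt{true} iff some assignment of $V_x$ consistent with the state makes every forgotten vertex IR. Forgotten vertices have all their neighbours inside $V_x$, so their utility is final when they are forgotten. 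The number of states is $\numCoals^{\tw+1}\cdot(2\numAgents+1)^{\tw+1}\cdot 2^{\numCoals}$, which is $\numAgents^{\Oh{\tw}}$ since $\numCoals\le \tw+1$.

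\textbf{Step 3: transitions.}
\begin{itemize}
\item \emph{Introduce $v$:} choose $c(v)$ and set $u_v=\sum_{w\in B_x,\,c(w)=c(v)}\val_v(w)$. For every bag vertex $w$ in the same coalition, add $\val_w(v)$ to $u_w$. This is exact because, by the tree-decomposition properties, the only neighbours of $v$ already in $V_x$ lie in the bag. Add $c(v)$ to $Q$.
\item \emph{Forget $v$:} require $u_v\ge 0$, then project $v$ out of the state.
\item \emph{Join:} the two children must agree on $c$. Combine utilities as $u_v=u^y_v+u^z_v-\sum_{w\in B_x,\,c(w)=c(v)}\val_v(w)$, since contributions from bag-internal edges are counted in both children. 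Set $Q=Q^y\cup Q^z$.
\end{itemize}
At the root, whose bag is empty, we accept iff $Q=[\numCoals]$. Each join combines pairs of states, so the running time stays $\numAgents^{\Oh{\tw}}$.

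\textbf{Main obstacle.} The reduction in Step 1 is what keeps the state space polynomial for fixed $\tw$. Without it, the assignment map contributes $\numCoals^{\tw+1}$ and the set $Q$ contributes $2^{\numCoals}$, neither of which is bounded for large $\numCoals$. The step that needs the most care in the correctness argument is the join bookkeeping: utilities from bag-internal edges must be subtracted exactly once, and utilities from edges to vertices introduced in separate subtrees must be added without double counting.

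For ASHG-SCC the same DP works if $Q$ is replaced by a vector of coalition sizes in $[\numAgents]_0^{\numCoals}$. This is polynomial only for fixed $\numCoals$, since Step 1 no longer applies under size constraints. This matches the table entry requiring parameterization by both $\tw$ and $\numCoals$.
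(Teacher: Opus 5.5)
Your proposal is correct and follows the paper's proof. For $k > \tw$ you answer \Yes{} via a proper coloring refined into $k$ independent coalitions. Otherwise you run a dynamic program over a nice tree-decomposition that stores each bag vertex's coalition and partial utility and subtracts bag-internal contributions at join nodes; your extra set $Q$ explicitly enforces that all $k$ coalitions are nonempty, a requirement the paper's sketch never checks.
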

\begin{proof}[Proof Sketch]
    If~$k \geq \tw+1$, then we can employ the following. 

    \begin{claim}\label{claim_coalitions_more_than_chromatic_number}
        Given a graph~$G$ and a proper coloring of~$G=(V,E)$ with~$c$ colors, we can compute, in polynomial time, a coalition structure with~$k\ge c$ coalitions that is individually rational.
    \end{claim}
    \begin{claimproof}
        Observe first that~$k< |V|$ as otherwise a coalition structure that contains all vertices of~$V$ as singletons is trivially IR. We will transform the proper~$c$-coloring of~$G$ into a proper~$k$-coloring by employing the following procedure. First, we find a color~$i\in[c]$ such that there exist at least two vertices~$u,v$ in~$G$ colored~$i$. This color exists since~$c\le k$ and~$k<|V|$. We then recolor~$u$ by using a new color, and update~$c$ accordingly. We repeat this procedure until~$c=k$. At this stage we have a proper~$k$-coloring of~$G$. We define the coalition structure~$\pi=(\pi_1,\dots,\pi_k)$ such that~$\pi_i$ contains the vertices colored~$i$. This is an IR structure, as each color induces an independent set.
    \end{claimproof}

    Thus, if~$k \ge \tw+1$, then~$(G,k)$ is a \Yes-instance of ASHGs. Furthermore, we can compute an individually rational coalition structure with~$k$ coalitions in \FPT time w.r.t.~$\tw$. To do this it suffices to compute the chromatic number of~$G$. This can be done in \FPT time w.r.t.~$\tw$ using a known algorithm~\citep{AP89}.

    So we may assume that~$k \le \tw$. We solve this case through a bottom-up dynamic programming algorithm on the nice tree-decomposition~$\mathcal{T}$ of the graph~$G$ rooted at a node~$r$. For a node~$t$ of~$\mathcal{T}$, we denote by~$B_t$ the bag of this node and by~$B_t^{\downarrow}$ the set of vertices of the graph that appears in the bags of the nodes of the subtree with~$t$ as a root. Observe that~$B_t\subseteq B_t^{\downarrow}$.

    On a high level, we build a set of partial solutions for each node~$t$ of the tree-decomposition. We then extend these partial solutions so that they are also partial solutions of the parent of~$t$. In particular, we keep coalition structures that may not necessarily be individually rational, but \textit{have the potential} to be in the future. 
    We will also show that it suffices to keep at most~$(tw\cdot \max \{|X_p|,|X_n|\})^{tw}$ representatives of these coalition structures, where~$X_p = \max \{ \sum_{w(e)> 0} w(e) \mid e\in E \}$ and~$X_n = \max \{ -\sum_{w(e)< 0} w(e) \mid e\in E \}$.

    Let~$B_t$ be a bag of~$T$. We denote by~$C_t=(C_i)_{i\in[k]}$ the projection of a coalition structure of~$G$ on~$B_t^\downarrow$, and we say that~$C_t$ is a \textit{partial coalition structure} of~$G$. Also, for any~$u\in B_t$, we denote by~$id(u)$ the index~$i\in[k]$ such that~$u\in C_i$ according to~$C_t$. 
    The information we keep for each node~$t$ of the tree-decomposition is as follows. 
    For each bag~$B_t$ of the tree-decomposition and \textit{partial coalition structure}~$C_t$ we store:
    \begin{itemize}
        \item a table~$I_t$ such that~$I_t[i] = C_i\cap B_t$, for each~$i \in [k]$, and
        \item and a table~$W_t$ such that~$W_t[u]=\sum_{v\in N(u)\cap C_{id[u]}}w(uv)$, for each~$u\in B_t$.
    \end{itemize} 

    Then,~$(G,k)$ is a \Yes-instance of ASHGs if and only if there is a ``partial'' coalition structure of~$B_r$ whose table~$W_r$ contains only non-negative values. 
    
    In order to construct these, we consider each type of node separately. 

    \smallskip
    \noindent\textbf{Leaf Nodes.} Let~$t$ be a leaf node. This case is trivial as~$B_t=\emptyset$. We set~$I_t[i]=\emptyset$. Also, no value of~$W_t$ needs defining.

    \smallskip
    \noindent\textbf{Introduce Nodes.} Let~$t$ be an introduce node,~$u$ be the introduced vertex and~$t'$ be the child of~$t$. For each partial solution~$(W_{t'},I_{t'})$ that we have stored for~$t'$, we guess in which coalition to place~$u$ and update the table~$I_{t'}$ accordingly. We then update~$W_{t'}$: we set~$W_t[u]= \sum_{v\in N(u)\cap C_{id[u]}}w(uv)$. This is correct since~$N(u)\cap C_{id[u]} \subseteq B_t$. Also, for any~$v \in N(u)$ we update~$W_{t'}[v]$ by adding~$w(uv)$. 

    \smallskip
    \noindent\textbf{Forget Nodes.} Let~$t$ be a forget node,~$u$ be the forgotten vertex and~$t'$ be the child of~$t$. This is the step where we make sure that the under-construction coalition structures do indeed have the potential to be IR. 
    For each~$(W_{t'},I_{t'})$ we have stored for~$t'$, we check whether~$W_{t'}[u]\geq 0$. If this isn't the case, we discard the corresponding partial coalition structure. Otherwise, we set~$W_t[v]=W_{t'}[v]$, for all~$v\in B_t$, and update~$I_{t'}$ by removing~$u$ from~$I_{t'}[id[u]]$.

    \smallskip
    \noindent\textbf{Join Nodes.} Let~$t$ be a join node with~$t_1$ and~$t_2$ being its children. We combine the pairs of tables~$(I_{t_1},W_{t_1})$ and~$(I_{t_2},W_{t_2})$ of~$t_1$ and~$t_2$ (respectively) in order to create the pair of tables~$(I_t,W_t)$. We combine two partial coalition structures if they ``agree'' on the vertices of~$B_t$, i.e.,~$I_{t_1}=I_{t_2}$. It then suffices to set the table~$W_t$ appropriately. In particular,~$W_t[u]=W_{t_1}[u]+W_{t_2}[u]-\sum_{v \in N(u)\cap C[id[u]]} w(uv)$. The term with the negative sum is there in order to avoid double counting the edges between vertices of~$B_t$.

    \medskip
    Lastly, observe that two partial coalition structures that agree on the vertices of~$B_t$, for any~$t$, will behave in the same way regarding the final coalition structures of~$G$. 
    Therefore it suffices to consider only one of such partial coalition structures. 
    In total, we need to keep at most~$(tw \cdot \max\{X_p, X_n\})^{\Oh{tw}}$ partial coalition structures per node~$t$.

    As for the running time, it suffices to consider the join nodes, as they are the most computationally demanding types of nodes. For any pair of tables~$(I_{t_1},W_{t_1})$ we consider all pairs of tables~$(I_{t_2},W_{t_2})$ such that~$I_{t_1}=I_{t_2}$. Thus, for any of the~$(tw \cdot\max\{X_p,X_n\})^{\Oh{tw}}$ partial coalition structures of~$t_1$, we consider at most~$(2\cdot\max\{X_p,X_n\})^{tw+1}$ partial coalition structures of~$t_2$. In total, this takes~$(tw \cdot \max\{X_p,X_n\})^{\Oh{tw}}$ time.
\end{proof}

It is not hard to see that, when the valuations are binary, if we additionally parameterize by the maximum degree of~$G$, the previous algorithm becomes fixed-parameter tractable.

\begin{corollary}\label{thm:ASHG:FPT:tw:deg}
    When the valuations are binary and the problem is parameterized by the treewidth of~$G$ and the maximum degree~$\Delta$, it is in \FPT to decide whether a~$\numCoals$-ASHGs~$\Gamma$ admits an individually rational coalition structure.
\end{corollary}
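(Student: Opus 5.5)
We reuse the dynamic program from the proof of \Cref{thm:ASHG:TW:binary:XP} unchanged and only re-analyse its running time, using the maximum degree~$\Delta$ to bound the numbers stored in the tables. First, we handle the case~$\numCoals \geq \tw+1$ exactly as in that proof. By \Cref{claim_coalitions_more_than_chromatic_number}, the instance is then a \Yes-instance, and an optimal proper colouring can be found in \FPT time with respect to~$\tw$. Hence we may assume~$\numCoals \leq \tw$. Since a nice tree-decomposition of width~$\Oh{\tw}$ can be computed in \FPT time, it remains to bound the number of distinct records~$(I_t,W_t)$ stored at each node by a function of~$\tw$ and~$\Delta$ alone.

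For the table~$I_t$, note that it assigns each of the at most~$\tw+1$ vertices of~$B_t$ to one of~$\numCoals\leq\tw$ coalitions. There are therefore at most~$\tw^{\tw+1}$ choices.

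For the table~$W_t$, the key observation is that with binary valuations every entry~$W_t[u]$ is a sum of at most~$\deg(u)\leq\Delta$ terms from~$\{-1,+1\}$. Thus~$W_t[u]\in[-\Delta,\Delta]$, and~$W_t$ takes at most~$(2\Delta+1)^{\tw+1}$ values. (Equivalently, the quantities~$X_p,X_n$ in the earlier bound, read per vertex, are at most~$\Delta$.) Records with equal~$(I_t,W_t)$ are interchangeable, as argued in the earlier proof, so we keep one representative per key. Each node then stores at most~$(\tw\cdot(2\Delta+1))^{\Oh{\tw}}$ records. Processing a join node pairs records with equal~$I_t$, which takes time quadratic in this quantity. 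Leaf, introduce and forget nodes are handled in time linear in the table size times~$\Oh{\tw}$.

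Summing over the~$\Oh{n\cdot\tw}$ nodes gives a total running time of~$(\tw\cdot\Delta)^{\Oh{\tw}}\cdot n^{\Oh{1}}$, which is \FPT in the combined parameter. The main subtlety is confirming that the XP bound in the earlier proof came only from the range of~$W_t$ entries, and not from, e.g., storing whole partial structures. I would double-check that the deduplication by~$(I_t,W_t)$ is sound, since it is what makes the table size depend only on these keys. Correctness is inherited from \Cref{thm:ASHG:TW:binary:XP}.
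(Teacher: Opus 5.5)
Your proposal is correct and matches the paper, which only remarks that the dynamic program of \Cref{thm:ASHG:TW:binary:XP} becomes \FPT{} once the maximum degree is also a parameter: the only factor in that program not bounded by the parameters is the range of the entries of~$W_t$, and with binary valuations each entry lies in~$[-\Delta,\Delta]$, which gives $(\tw\cdot\Delta)^{\Oh{\tw}}\cdot n^{\Oh{1}}$ time. On the point you flagged, merging records with equal~$(I_t,W_t)$ is safe for individual rationality, but these keys do not record which of the~$\numCoals$ coalitions already contain a forgotten agent, so to enforce exactly~$\numCoals$ nonempty coalitions you should also store a vector in~$\{0,1\}^{\numCoals}$, which costs only a factor~$2^{\numCoals}\le 2^{\tw}$ and keeps the bound \FPT.
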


\section{Conclusions}

In our work, we study two natural restrictions of hedonic games, namely~$\numCoals$-hedonic games, where we are asked to form exactly~$\numCoals$ stable coalitions, and hedonic games with size-constrained coalitions, where the goal is to construct~$\numCoals$ coalitions whose sizes are between the given lower and upper bound. We provide a fairly complete algorithmic landscape from the perspective of both classical and parameterized complexity for individual rationality and two restricted variants of hedonic games where valuations are encoded using an underlying preference graph. 

However, we believe that our contribution is much broader. First, we bring to light that individual rationality, a traditionally neglected stability notion in coalition formation, can be surprisingly interesting. For example, in the case of~$\numCoals$-ASHGs, the complexity picture of our problem is identical to the core stability verification problem in (unrestricted) ASHGs~\citep{HanakaKL2024}, even though the techniques used are very different. We believe it is worth revisiting how traditional stability notions interplay. Moreover, to the best of our knowledge, we are the first using the N-fold ILP technique in the area of coalition formation, which turned out to be a very powerful tool in the design of a tractable algorithm for this domain.

An immediate open question left after our work is the complexity classification of ASHGs with fixed size coalitions under binary preferences and bounded treewidth preference graph. Specifically, the algorithm provided in \Cref{thm:ASHG:TW:binary:XP} relies on a connection between ASHGs and the chromatic number of bounded treewidth graphs, which can no longer be easily exploited if we are in the setting of size-constrained coalitions. Another natural direction is to study hedonic games with \emph{friends appreciation} or \emph{enemy aversion} preferences~\citep{LangRRSS2015,OhtaBISY2017,RotheSS2018,ChenCRS2023}, a restrictions of ASHGs which can allow for more tractable cases.

\begin{acks}
    This project was supported by the European Research Council (ERC) under the European Union’s Horizon 2020 research and innovation programme (grant agreement No 101002854), by the European Research Council (ERC) grant titled PARAPATH (grant agreement number 101039913), and co-funded by the European Union under the project Robotics and advanced industrial production (reg. no. CZ.02.01.01/00/22\_008/0004590).
    Foivos Fioravantes is supported by the International Mobility of Researchers MSCA-F-CZ-III at Czech Technical University in Prague,~$\text{CZ}.02.01.01/00/22\_010/0008601$ Programme Johannes Amos Comenius. 
    Nikolaos Melissinos is partially supported by Charles University projects UNCE 24/SCI/008 and PRIMUS 24/SCI/012, and by the project 25-17221S of the Czech Science Foundation (GAČR).
    
    \begin{center}
        \vspace{0.25cm}
        \includegraphics[width=4cm]{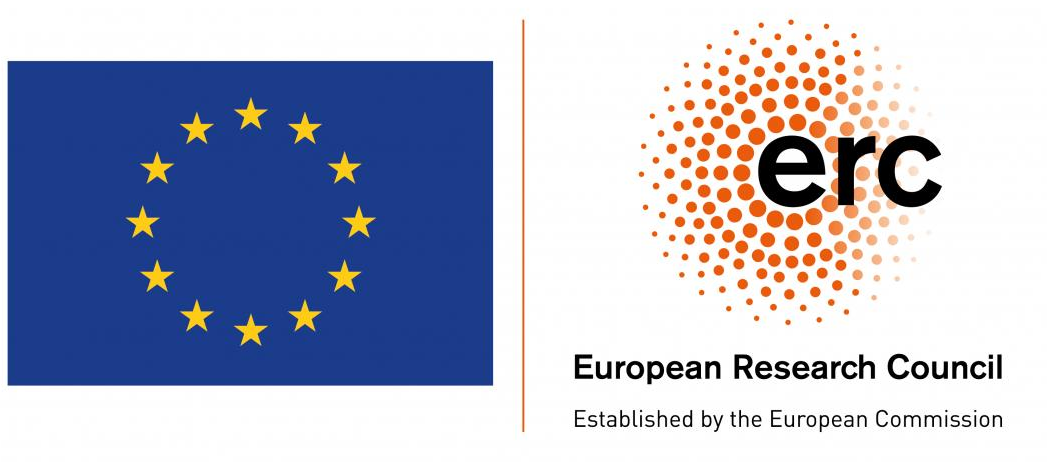}
    \end{center}
\end{acks}

\printbibliography

\appendix

\end{document}